\DeclareMathOperator*{\argmin}{arg\,min}
\DeclareMathOperator*{\diag}{Diag}
\DeclareMathOperator*{\dom}{dom}
\def\A{{\mathcal A}}
\def\B{{\mathcal B}}
\def\C{{\mathcal C}}
\def\D{{\mathcal D}}
\def\L{{\mathcal L}}
\def\O{{\mathcal O}}
\def\P{{\mathcal P}}
\def\X{{\mathcal X}}
\def\Y{{\mathcal Y}}
\theoremstyle{plain}
\newtheorem{theorem}{Theorem}[section]
\newtheorem{assumption}{Assumption}[section]
\begin{document}

\begin{frontmatter}
\title[SSNAL]{Semismooth Newton Augmented Lagrangian Algorithm for Adaptive Lasso Penalized Least Squares in Semiparametric Regression}

\begin{aug}
    \author{\fnms{Peili Li}\ead[label=e1]{lipeili@henu.edu.cn}},
    \address{
     School of Mathematics and Statistics, Henan University, Kaifeng 475000\\
     Center for Applied Mathematics of Henan Province, Henan University, Zhengzhou 450046\\
             China\\
             \printead{e1}}
    \author{\fnms{Yunhai Xiao}\thanksref{t2}\ead[label=e2]{yhxiao@henu.edu.cn}},
    \address{School of Mathematics and Statistics, Henan University, Kaifeng 475000\\
     Center for Applied Mathematics of Henan Province, Henan University, Zhengzhou 450046\\
             China\\
             \printead{e2}}
    \author{\fnms{Meixia Yang} \ead[label=e3]{mathmxyang@163.com}}
    \address{School of Mathematics and Statistics, Henan University, Kaifeng 475000\\
             China\\
             \printead{e3}}
    \and
        \author{\fnms{Hanbing Zhu}\ead[label=e4]{zhuhbecnu@163.com}}
    \address{School of Big Data and Statistics, Anhui University, Hefei 230601\\
             China\\
             \printead{e4}}
    \thankstext{t2}{Corresponding author}
\end{aug}

\begin{abstract}
This paper is concerned with a partially linear semiparametric regression model containing an unknown regression coefficient, an unknown nonparametric function, and an unobservable Gaussian distributed random error. We focus on the case of simultaneous variable selection and estimation with a divergent number of covariates under the assumption that the regression coefficient is sparse. We consider the applications of the least squares to semiparametric regression and particularly present an adaptive lasso penalized least squares (PLS) method to select the regression coefficient. We note that there are many algorithms for PLS  in various applications, but they seem to be rarely used in semiparametric regression. This paper focuses on using a semismooth Newton augmented Lagrangian (SSNAL) algorithm to solve the dual of PLS which is the sum of a smooth strongly convex function and an indicator function. At each iteration, there must be a strongly semismooth nonlinear system,  which can be solved by semismooth Newton by making full use of the penalized term. We show that the algorithm offers a significant computational advantage, and the semismooth Newton method admits fast local convergence rate. Numerical experiments on simulated and real data have demonstrated the effectiveness of the PLS method and the progressiveness of the SSNAL algorithm.
\end{abstract}

\begin{keyword}[class=AMS]
\kwd[Primary ]{65F99}
\kwd[; secondary ]{90C99}
\end{keyword}

\begin{keyword}
\kwd{Semiparametric regression}
\kwd{least squares estimation}
\kwd{adaptive lasso}
\kwd{augmented Lagrangian method}
\kwd{semismooth Newton method}
\end{keyword}

\end{frontmatter}

\section{Introduction}\label{section1}

Statistical inference on a multidimensional random variable commonly focuses on functionals of its distribution that are either purely parametric, or purely nonparametric, or semiparametric as an intermediate strategy.
Semiparametric regression makes full use of the known information, makes up for the shortcomings of nonparametric, and gives full play to the advantages of the parametric.
Suppose that the random sample $\{(X_{i},T_{i},Y_{i})\}_{i=1}^n$ is generated from the following partially linear semiparametric regression model
\begin{equation}\label{sprm}
Y_{i}=X_{i}^{\top}\beta+g(T_{i})+\varepsilon_{i},\quad i=1,2,\ldots,n,
\end{equation}
where $Y_{i}$'s are scalar response variates, $X_i$'s are $p$-variate covariates, $T_i$'s are $d$-variate covariates,
$(X_i, T_i)$ are either independent and identically distributed (i.i.d.) random design points or fixed design points;
$\beta$ is an unknown $p$-variate regression coefficient, $g(\cdot)$ is an unknown measurable function from $\mathbb{R}^d$ to $\mathbb{R}$, and $\varepsilon_i$'s are random statistical errors. It is assumed that the errors $\varepsilon_i$'s are i.i.d. random variates and independent of $\{(X_{i},T_{i})\}$ with zero mean and variance $\sigma^2$.
Without loss of generality, we assume that $T_{i}$ and $g(\cdot)$ are scaled into the closed interval $[0,1]$.
Given the data $\{(X_{i},T_{i},Y_{i})\}_{i=1}^n$, the aim of partly linear semiparametric regression is to estimate the coefficient $\beta$ and the function $g(\cdot)$ from the data.

The interest in semiparametric regression model has grown significantly over the past few decades since it was introduced by \cite{Engle}  to analyze the relationship between temperature and electricity usage.
The model has been widely studied in a large variety of fields, such as finance, economics, geology and biology, to name only a few.
For an excellent survey, one can refer to the book of \cite{BOOK1}.
A potential challenge of estimation in this model is that it is composed of a finite-dimensional coefficient  $\beta$, and an infinite-dimensional parameter $g(\cdot)$.
We know that Least squares (LS) method is effective to find the optimal estimation of unknown quantity from an error contained observation \citep{Stigler}.
It is linear, unbiased and minimum variance, and particularly based on the famous Gauss-Markov theorem in linearly parametric regression model.

In recent years, high data dimensionality has brought unprecedented challenges and attracted increasing research attention.
When the data dimension diverges, variable selection through penalty functions is particularly effective, and selecting variables and estimating parameters are possible to be achieved simultaneously.
The commonly used penalty functions include lasso \cite{LASSO}, fused lasso \cite{FLASSO}, adaptive lasso \cite{Zou}, and SCAD \cite{FanLi}.
Combining these penalty functions with LS, various powerful penalization methods have been developed for variable selection in the literature.
For examples, \cite{long-data} employed the SCAD penalized least squares (PLS) for semiparametric model in longitudinal data analysis.
\cite{spline} applied the SCAD PLS to achieve sparsity in linear part and use polynomial spline to estimate the nonparametric part in partially linear model.
\cite{error} studied the SCAD PLS for partially linear models with measurement errors.
\cite{double} proposed a double-PLS method for partially linear model using the smoothing spline to estimate the nonparametric part and applying a shrinkage penalty on parametric components to achieve model parsimony.
\cite{K2021} applies an adaptive lasso penalty to both the linear and nonlinear components and uses a signed-rank technique for estimation.

In this paper, we are also interested in the PLS for parameter estimation and variable selection in the semiparametric regression model (\ref{sprm}) with diverging numbers of parameters.
The model (\ref{sprm}) can replace the baseline function by the estimator obtained under the assumption that the parameter is known, then it can be approximately regarded as a purely linear regression model.
Based on the linear regression model, we construct PLS function with adaptive lasso for regression parameter.
We show that its oracle properties can be proved in a similar way to the work of \cite{Zou}.
Seeing from a numerical point of view, the PLS is exactly a sum of a smooth strongly convex function and a nonsmooth adaptive lasso penalty term, so that it can be solve via various structured algorithms, such as the first-order alternating direction method of multipliers (ADMM) \citep{GM1976} and the second-order semismooth Newton method \citep{BYRD,LST}. Given the rare use of higher-order methods in semiparametric regression, this paper focuses on the second-order method to solve the PLS problem by making use of the second-order information of the adaptive lasso penalty.
We observe that the dual problem of PLS  consists of a smooth strongly convex function and an indicator function, which inspires us to employ the semismooth Newton augmented Lagrangian (SSNAL) method of \cite{LST} to solve it.
The most notable feature of this method is that there involves a strongly semismooth nonlinear system which comes from the proximal mapping of the adaptive lasso penalty. For this nonlinear systems, we note that its generalized Jacobian at its solution is symmetric and positive definite, so it is highly possible to design an efficient algorithm.
Finally, we conduct some numerical experiments on some synthetic data and real data sets.
The numerical results illustrate that PLS method with an adaptive lasso penalty is effective, and the employed SSNAL method is more advanced than ADMM.

The remaining parts of this paper are organized as follows.
In Section \ref{presult}, we quickly review some basic concepts in convex analysis and key ingredients needed for our subsequent developments.
In Section \ref{model}, we propose PLS with adaptive lasso for regression coefficient, and then construct its dual formulation and optimality condition.
In Section \ref{algorithm}, we use SSNAL to solve the dual problem and employ a semismooth Newton (SSN) to the involved semismooth nonlinear system.
In Section \ref{numer}, we report the numerical experiments by using some benchmark data.
Finally, we conclude our paper in Section \ref{consec}.

To end this section, we summarize some notations used in this paper.
For variates $x$, its $i$-th entry is denoted by $x_{i}$.
We denote $X=\text{Diag}(x)$ as a diagonal matrix with its $i$-th entry on the diagonal being $x_{i}$.
For variates $x$, we denote $\B_{\infty}^{(\tau)}:=\{x\ | \ \|x\|_{\infty}\leq\tau\}$, or $(\B_{\infty}^{(\tau)})_i:=\{x_{i}\ | \ |x_{i}|\leq\tau\}$ at component wise.
The $\ell_{1}$-norm (a.k.a. lasso), $\ell_2$-norm, and $\ell_{\infty}$-norm of a $p$-variates are defined by, respectively, $\|x\|_{1}:=\sum_i |x_{i}|$,  $\|x\|_2:=\sqrt{\sum_i x_{i}^{2}}$, and $\|x\|_{\infty}:=\max_i |x_{i}|$.
The transpose operation of a variates or a matrix is denoted by superscript ``$\top$".
For a linear operator $\A$, its adjoint is represented by $\A^{\ast}$, or $\A^{\top}$ at matrix case.
For variates $x$, $y$ with appropriate sizes, we define $\langle x,y\rangle=x^{\top}y$.
For a nonempty closed convex set $\C$, the symbol $\delta_{\C}(x)$ represents
an indicator function over $\C$ such that $\delta_{\C}(x)=0$ if $x\in\C$ and $+\infty$ otherwise.
We denote $I_{p}$ and $\mathbf{0}_{p}$ as $p$-dimensional identity matrix and zero vector, respectively.

\section{Preliminaries}\label{presult}

In this section, we summarize some basic concepts in convex analysis and briefly recall the SSNAL method for subsequent developments.
\subsection{Basic Concepts}\label{sub1}

Let $\X$ be finite dimensional real Euclidean space equipped with an inner product $\langle \cdot,\cdot \rangle $ and its induced norm $\|\cdot\|_2$.
For any $z\in\X$, the metric projection of $z$ onto $\C$ denoted by $\Pi_{\C}(z)$ is the optimal solution of the
minimization problem $\min\limits_y\{ \|y-z\|_2 \ | \ y\in\C\}$. Let $f:\X\rightarrow(-\infty,+\infty]$ be a closed proper convex function.
The effective domain of $f$ is defined by $\text{dom}(f):=\{x \ | \ f(x)<+\infty\}$.
The subdifferential of $f$ at $x \in \text{dom}(f)$ is defined by $\partial f(x):=\{x^*\ | \ f(z)\geq f(x)+\langle x^*,z-x\rangle, \ \forall z\in\X\}$.
Obviously, $\partial f(x)$ is a closed convex set when it is not empty \citep{RR}.

The dual norm $\|\cdot\|_*$ of a norm $\|\cdot\|$ is defined by:
$$
\|x\|_*:=\sup_{y\in\X}\{x^\top y \ | \ \|y\|\leq 1\}.
$$
It is easy to see that the dual norm of $\|\cdot\|_2$ is itself, and the $\ell_{1}$-norm and  $\ell_{\infty}$-norm are dual with respect to each other.
The Fenchel conjugate of a convex $f$ at $y\in\X$ is defined by
$$
f^\star(y):=\sup_{x\in \X}\{\langle x,y\rangle-f(x)\}=-\inf_{x\in \X}\{f(x)-\langle x,y\rangle\}, \quad \forall y\in \X.
$$
It is well known that the conjugate function $f^\star(y)$ is always convex and closed, proper if and only if $f$ is proper and convex \citep{RR}.
For any $x\in\X$, there exists a $y\in\X$ such that $y\in\partial f(x)$ or equivalently $x\in\partial f^{\star}(y)$ owing to a fact of $f$ being closed and convex \cite[Theorem 23.5]{RR}. Using the
definition of the dual norm, it is easy to deduce that the Fenchel conjugate of $\|x\|_{1}$ is  $\|x\|_{1}^\star=\delta_{\B^{(1)}_{\infty}}(x)$.

We have the following result for the metric projection (in $\ell_{\infty}$-norm) onto $\B^{(r)}_{\infty}$.
Given $x\in\X$, the orthogonal projection onto $\B^{(r)}_{\infty}$ is defined by
\begin{align*}
\Pi_{\B^{(r)}_{\infty}}(x)=\min\{r,\max\{x,-r\}\}.
\end{align*}
For any closed proper convex function $f:\X\rightarrow(-\infty,+\infty]$, the Moreau-Yosida regularization of $f$ at $x\in\X$ with positive
scalar $\tau>0$ is defined by
\begin{align*}
\varphi_{f}^{\tau}(x):=\min\limits_{y \in \X}\big\{f(y)+\frac{1}{2\tau}\|y-x\|^2_2\big\}.
\end{align*}
Moreover, the above problem has an unique optimal solution, which is known as the proximal mapping of $x$ associated with $f$, i.e.,
\begin{align*}
\P_{f}^{\tau}(x):=\argmin\limits_{y \in \X}\big\{f(y)+\frac{1}{2\tau}\|y-x\|^2_2\big\}.
\end{align*}
The proximal mapping of the $\ell_1$-norm function at point $x$ obeys the following form
$$
\P_{\|\cdot\|_1}^{\tau}(x)=\text{sgn} (x) \odot \max\big\{|x|-\tau,0\big\},
$$
where $\odot$ is Hadamard product, and the sign function $\text{sgn}(\cdot)$ and absolute value function $|\cdot|$ are component-wise. It is also well known that $\P_f^\tau(\cdot)$ is firmly non-expansive and globally Lipschitz continuous with modulus $1$.
For any $x\in\X$, the Moreau decomposition is expressed as $x=\P_{f}^{\tau}(x)+\tau\P_{f^\star}^{1/\tau}(x/\tau)$.
For an example, the proximal mapping of $\ell_1$-norm at $x$ can be expressed as $\P_{\|\cdot\|_1}^\tau(x)=x-\Pi_{\B_\infty^{(\tau)}}(x)$, which will be used frequently at the following parts.

\subsection{Review on SSNAL}\label{sub2}

Consider the general convex composite optimization model
\begin{equation}\label{ssnp}
\min_{x\in\X}\big\{f(x):=h(\A x)-\langle c,x\rangle+p(x)\big\},
\end{equation}
where $\A:\X\rightarrow\Y$ is a linear map, $h:\Y\rightarrow\mathbb{R}$ and $p:\X\rightarrow(-\infty,+\infty]$ are two closed proper convex functions, and $c\in\X$ is a given variates.
We assume that $h$ is locally strongly convex and differentiable  whose gradient is $L(h)$-Lipschitz continuous.
The dual problem of (\ref{ssnp}) can be rewritten equivalently as
\begin{equation}\label{ssnd}
\min_{y,z}\big\{h^{\star}(y)+p^{\star}(z)\ | \ \A^{\ast}y+z=c\big\},
\end{equation}
where $h^{\star}$ and $p^{\star}$ are the Fenchel conjugate of $h$ and $p$, respectively.
The assumptions on $h$ imply that $h^{\star}$ is strongly convex  \cite[Proposition 12.60]{RW}, essentially smooth. And its gradient $\nabla h^{\star}$ is locally Lipschitz continuous on $\text{int}(\dom h^{\star})$ with modulus $1/L(h)$ \cite[Corollary 4.4]{GR}. Solving problem (\ref{ssnp}) and its dual (\ref{ssnd}) is equivalent to finding $(\bar{y},\bar{z},\bar{x})$ such that the following Karush-Kuhn-Tucker (KKT) system holds
$$
\mathbf{0}=\nabla h^{\star}(\bar{y})-\A \bar{x}, \quad \mathbf{0}\in-\bar{x}+\partial p^{\star}(\bar{z}), \quad \mbox{and} \quad \mathbf{0}=\A^{\star}\bar{y}+\bar{z}-c.
$$

Given $\sigma_k>0$, the augmented Lagrangian function associated with (\ref{ssnd}) is given by
$$
\L_{\sigma}(y,z;x)=h^{\star}(y)+p^{\star}(z)-\langle x,\A^{*}y+z-c\rangle+\frac{\sigma_k}{2}\big\|\A^{*}y+z-c\big\|_2^{2},
$$
where $x\in\X$ is a multiplier.
Starting from $(y^{(0)},z^{(0)},x^{(0)})\in \text{int}(\dom h^{\star})\times\dom p^{\star}\times\X$, the SSNAL method of \cite{LST} for solving (\ref{ssnd}) takes the following framework
\begin{equation}\label{al}
\left\{
\begin{array}{l}
(y^{(k+1)},z^{(k+1)})\approx \argmin\limits_{y,z}\big\{\Psi_{k}(y,z):=\L_{\sigma_k}(y,z;x^{(k)})\big\},
\\[2mm]
x^{(k+1)}=x^{(k)}-\sigma_k(\mathcal{A}^{\ast}y^{(k+1)}+z^{(k+1)}-c), \text{and} \ \sigma_{k+1}\uparrow \sigma_{\infty}\leq \infty.
\end{array}
\right.
\end{equation}

Since the $(y,z)$-problems don't need to be solved exactly, it is appropriate to use the standard stopping criterion of \cite{Rockafellar1976a, Rockafellar1976b}:
\begin{equation}\label{stop-1}
\Psi_k(y^{(k+1)},z^{(k+1)})-\inf\limits_{y,z}\Psi_k(y,z)\leq\varepsilon_{k}^{2}/2\sigma_{k}, \quad \varepsilon_{k}\geq 0, \quad \sum_{k=0}^{\infty}\varepsilon_{k}<\infty.
\end{equation}

For the global convergence of (\ref{al}) with stopping criterion of (\ref{stop-1}), one can refer to \cite[Theorem 3.2]{LST}. To easily follow this part, we present the convergence results without proofs here.
\begin{theorem}
Suppose that the solution set to (\ref{ssnp}) is nonempty. Let $\{(y^{(k)},\\z^{(k)},x^{(k)})\}$ be an infinite sequence generated by the iterative framework (\ref{al}) with stopping criterion (\ref{stop-1}). Then, the sequence $\{x^{(k)}\}$ is bounded and converges to an optimal solution of (\ref{ssnp}). In addition, $\{(y^{(k)},z^{(k)})\}$ is also bounded and converges to the unique optimal solution $(\bar{y},\bar{z})\in\text{int}(\dom h^{\star})\times\dom p^\star$ of (\ref{ssnd}).
\end{theorem}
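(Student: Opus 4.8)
The plan is to recognize the iteration (\ref{al}) as an inexact proximal point algorithm (PPA) applied to the maximal monotone operator $\T_f:=\partial f$ of the primal objective $f$ in (\ref{ssnp}), and then to invoke the classical convergence theory of \cite{Rockafellar1976a, Rockafellar1976b}. First I would observe that, because $f$ is closed proper convex, $\T_f=\partial f$ is maximal monotone and its zero set $(\partial f)^{-1}(0)$ is exactly the solution set of (\ref{ssnp}), which is nonempty by hypothesis; hence the resolvent $(I+\sigma_k\T_f)^{-1}$ is single-valued and firmly nonexpansive for each $\sigma_k>0$.

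The key step is the augmented-Lagrangian/proximal-point correspondence: when the inner subproblem $\min_{y,z}\Psi_k(y,z)$ is solved \emph{exactly}, the multiplier update produces precisely $x^{(k+1)}=(I+\sigma_k\T_f)^{-1}(x^{(k)})$. I would verify this by writing the optimality conditions of the inner problem, which after substituting $x^{(k+1)}=x^{(k)}-\sigma_k(\A^{\ast}y^{(k+1)}+z^{(k+1)}-c)$ reduce to $\A x^{(k+1)}=\nabla h^{\star}(y^{(k+1)})$ and $x^{(k+1)}\in\partial p^{\star}(z^{(k+1)})$; these are equivalent to $y^{(k+1)}=\nabla h(\A x^{(k+1)})$ and $z^{(k+1)}\in\partial p(x^{(k+1)})$. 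Combining them with $\partial f(x)=\A^{\ast}\nabla h(\A x)-c+\partial p(x)$ shows that $(x^{(k)}-x^{(k+1)})/\sigma_k=\A^{\ast}y^{(k+1)}+z^{(k+1)}-c\in\partial f(x^{(k+1)})$, which is exactly the resolvent relation $x^{(k)}\in x^{(k+1)}+\sigma_k\partial f(x^{(k+1)})$.

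Next I would reconcile the inexactness with Rockafellar's criterion~(A), namely $\|x^{(k+1)}-(I+\sigma_k\T_f)^{-1}(x^{(k)})\|_2\le\varepsilon_k$ with $\sum_k\varepsilon_k<\infty$. The mechanism is the strong-duality identity relating the inner objective $\Psi_k$ to the Moreau--Yosida envelope of $f$ at $x^{(k)}$: the duality gap bounds the squared distance from any approximate minimizer to the exact proximal point, so that $\frac{1}{2\sigma_k}\|x^{(k+1)}-\widehat{x}^{(k+1)}\|_2^2\le\Psi_k(y^{(k+1)},z^{(k+1)})-\inf_{y,z}\Psi_k$, where $\widehat{x}^{(k+1)}$ denotes the exact proximal point. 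Criterion (\ref{stop-1}) then gives $\|x^{(k+1)}-\widehat{x}^{(k+1)}\|_2\le\varepsilon_k$, so criterion~(A) holds. Applying the convergence theorem of \cite{Rockafellar1976a} yields that $\{x^{(k)}\}$ is bounded and converges to some $\bar{x}\in(\partial f)^{-1}(0)$, i.e.\ an optimal solution of (\ref{ssnp}).

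Finally, for the dual sequence I would use the standing assumption that $h$ is locally strongly convex with Lipschitz gradient, so that $h^{\star}$ is strongly convex and essentially smooth and the dual solution is unique. Since $\{x^{(k)}\}$ is convergent and $\sigma_k$ is bounded away from zero, the feasibility residual $\A^{\ast}y^{(k+1)}+z^{(k+1)}-c=(x^{(k)}-x^{(k+1)})/\sigma_k$ tends to $0$. The relation $y^{(k+1)}=\nabla h(\A x^{(k+1)})$ (up to the vanishing inner error) together with continuity of $\nabla h$ forces $y^{(k)}\to\nabla h(\A\bar{x})=:\bar{y}$, and then $z^{(k)}\to c-\A^{\ast}\bar{y}=:\bar{z}$ through the vanishing residual; uniqueness of $\bar{y}$ (from strong convexity of $h^{\star}$) and of $\bar{z}$ (pinned down by the affine constraint) upgrades cluster-point convergence to full convergence. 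The main obstacle I anticipate is this final block: the dual objective is strongly convex only in the $y$-variable while $p^{\star}$ is merely convex, so the convergence of $z^{(k)}$ cannot be read off from coercivity and must instead be recovered indirectly through the affine constraint and the vanishing residual, which in turn hinges on the delicate translation of the value-gap criterion (\ref{stop-1}) into the iterate bound of criterion~(A).
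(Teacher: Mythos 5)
Your proposal is correct, but note that the paper does not actually prove this statement: it simply quotes it as \cite[Theorem~3.2]{LST}, whose proof (and the sketch the authors give for the analogous Theorem~4.1) rests on exactly the route you take --- identifying the multiplier update with an inexact proximal point iteration on $\partial f$, converting the value-gap criterion (\ref{stop-1}) into Rockafellar's criterion~(A) via the inequality $\frac{1}{2\sigma_k}\|x^{(k+1)}-\widehat{x}^{(k+1)}\|_2^2\le\Psi_k(y^{(k+1)},z^{(k+1)})-\inf\Psi_k$, and recovering the dual iterates from the strong convexity of $h^\star$ plus the vanishing feasibility residual. The only point worth tightening is the phrase ``up to the vanishing inner error'' in the last block: to make $y^{(k)}\to\bar y$ rigorous you should bound $\|y^{(k+1)}-\bar y^{(k)}\|_2$ by the value gap using the uniform strong convexity of $\psi_k(y)=\inf_z\Psi_k(y,z)$ inherited from $h^\star$, which is precisely how the cited proof closes this step.
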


The main computational burden of SSNAL lies in solving the augmented Lagrangian subproblem, which is regarded as solving the following problem with fixed $\sigma > 0$ and $\widetilde{x} \in \mathcal{X}$:
$$
\min_{y,z} \ \big\{\Psi(y,z):=\L_{\sigma}(y,z;\widetilde{x})\big\}.
$$
For any $y\in\mathcal{Y}$, we define $\psi(y):=\inf\limits_{z}\Psi(y,z)$. After some simple manipulations, we have
\begin{align*}
\psi(y)=&h^{\star}(y)+p^{\star}\big(\P_{p^{\star}}^{1/\sigma}(\widetilde{x}/\sigma-\mathcal{A}^{\ast}y+c)\big)\\
&+\frac{1}{2\sigma}\big\|\P_{p}^{\sigma}(\widetilde{x}-\sigma(\mathcal{A}^{\ast}y-c))\big\|_2^{2}-\frac{1}{2\sigma}\|\widetilde{x}\|_2^{2}.
\end{align*}
If $(\bar{y},\bar{z})=\argmin\limits_{y,z}\Psi(y,z)$, we can get
$$
\bar{y}=\argmin_y\psi(y) \quad\text{and}\quad \bar{z}=\P_{p^{\star}}^{1/\sigma}(\widetilde{x}/\sigma-\mathcal{A}^{\ast}\bar{y}+c).
$$
Note that $\psi(y)$ is strongly convex and continuously differentiable on $\text{int}(\dom \\h^{\star})$,
thus $\bar{y}$ can be obtained via solving the nonsmooth system
\begin{equation}\label{grad}
\nabla\psi(y)=\nabla h^\star(y)-\mathcal{A}\P_{p}^{\sigma}(\tilde{x}-\sigma(\mathcal{A}^*y-c))=\mathbf{0}, \quad y\in\text{int}(\dom h^{\star}).
\end{equation}
When the generalized Jacobian of $\nabla \psi(y)$ at $y$ was explicitly constructed by using the strongly semismoothness of $\nabla h^\star(\cdot)$ and $\P_{p}^{\sigma}(\cdot)$, then (\ref{grad}) can be solved effectively by SSN method.
To end this part, we list the convergence result of SSN to (\ref{grad}). For its proof, one can refer to \cite[Theorem 3.5]{LST}.
\begin{theorem}
Assume that $\nabla h^\star(\cdot)$ and $\P_{p}^{\sigma}(\cdot)$ are strongly semismooth on $\text{int}(\dom h^{\star})$ and $\mathcal{X}$, respectively.
Let the sequence $\{y^{(l)}\}$ be generated by SSN algorithm. Then $\{y^{(l)}\}$ converges to the solution $y^\infty\in\text{int}(\dom h^{\star})$ of (\ref{grad}) and
$$
\|y^{(l+1)}-y^\infty\|_2=\O(\|y^{(l)}-y^\infty\|_2^{1+\varrho}),
$$
where $\varrho\in(0,1]$.
\end{theorem}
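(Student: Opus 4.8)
The plan is to recognize the iteration as a semismooth Newton method applied to the nonsmooth equation $F(y):=\nabla\psi(y)=0$ of (\ref{grad}), and then to invoke the classical local convergence theory for such methods. First I would verify that $F$ is strongly semismooth on $\text{int}(\dom h^\star)$. Writing $F(y)=\nabla h^\star(y)-\A\,\P_{p}^{\sigma}(u(y))$ with the affine inner map $u(y):=\widetilde{x}-\sigma(\A^\ast y-c)$, strong semismoothness follows because $\nabla h^\star(\cdot)$ and $\P_{p}^{\sigma}(\cdot)$ are strongly semismooth by hypothesis, affine maps are trivially strongly semismooth, and the class of strongly semismooth mappings is closed under composition and addition.

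The core of the argument is to control the generalized Jacobian of $F$ at the solution $y^\infty$. Using the chain rule for Clarke's generalized Jacobian together with $\partial u(y)=-\sigma\A^\ast$, every element of the surrogate generalized Hessian at $y$ takes the form
$$
V=\nabla^2 h^\star(y)+\sigma\,\A\,M\,\A^\ast, \qquad M\in\partial\P_{p}^{\sigma}(u(y)).
$$
I would then show that each such $V$ at $y^\infty$ is symmetric positive definite. The term $\nabla^2 h^\star(y^\infty)$ is symmetric positive definite because the Lipschitz continuity of $\nabla h$ makes $h^\star$ strongly convex, as recalled just before (\ref{ssnd}). Since $\P_{p}^{\sigma}$ is the proximal mapping of a closed proper convex function and hence firmly non-expansive, each $M\in\partial\P_{p}^{\sigma}(u(y^\infty))$ is symmetric positive semidefinite with eigenvalues in $[0,1]$, so $\sigma\,\A M\A^\ast\succeq 0$. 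Consequently $V\succ 0$, and in particular every element of $\partial F(y^\infty)$ is nonsingular.

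With strong semismoothness of $F$ and nonsingularity of its generalized Jacobian at $y^\infty$ in hand, the standard local convergence theorem for semismooth Newton methods applies directly: by upper semicontinuity of $\partial F$ there is a neighbourhood of $y^\infty$ on which the chosen elements $V_l$ stay invertible with uniformly bounded inverses, the Newton steps are well defined, and the iterates satisfy $\|y^{(l+1)}-y^\infty\|_2=\O(\|y^{(l)}-y^\infty\|_2^{1+\varrho})$, where $\varrho\in(0,1]$ reflects the order of semismoothness ($\varrho=1$ in the strongly semismooth case). I expect the main obstacle to be establishing the positive definiteness of $V$ cleanly, namely combining the strong convexity of $h^\star$ with the positive semidefiniteness of the generalized Jacobian of the proximal operator; once nonsingularity is secured, the rate estimate is a routine consequence of the semismoothness inequality and the local boundedness of $\{V_l^{-1}\}$ near $y^\infty$.
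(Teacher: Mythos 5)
The paper gives no proof of this theorem at all: it is stated verbatim as Theorem 3.5 of \cite{LST} and the reader is sent to that reference, so the only meaningful comparison is with the argument behind the cited result. Your sketch correctly reconstructs the \emph{local-rate} half of that argument: strong semismoothness of $\nabla\psi$ via closure of the strongly semismooth class under sums and composition with affine maps, positive definiteness of every element $V = V_0 + \sigma\,\A M \A^{\ast}$ of the surrogate generalized Hessian (with $V_0$ positive definite from the strong convexity of $h^{\star}$ and $M$ symmetric positive semidefinite because $\P_{p}^{\sigma}$ is firmly non-expansive), and then the standard semismooth Newton estimate once the iterates are near $y^{\infty}$. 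This is indeed the skeleton of the cited proof.

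There are, however, two issues. The substantive one: the theorem asserts that the sequence $\{y^{(l)}\}$ \emph{converges} to $y^{\infty}$, not merely that the rate holds once the iterates happen to be close; your argument is purely local and never shows the iterates enter the neighbourhood where the Newton estimate applies. The SSN algorithm of \cite{LST} (and the version written out in Section 4 of this paper) is globalized by an Armijo line search; convergence of the whole sequence is obtained from the descent property combined with the strong convexity and bounded level sets of $\psi$, and only afterwards does one show that the unit step is eventually accepted so that the local rate takes over. Without this globalization step the statement as written is not proved. The minor one: you write $\nabla^{2}h^{\star}(y^{\infty})$, but $h^{\star}$ is only assumed to have a (strongly semismooth) gradient, so the relevant object is an arbitrary element of the Clarke Jacobian $\partial(\nabla h^{\star})(y^{\infty})$; its positive definiteness still follows from the strong convexity of $h^{\star}$ via the monotonicity inequality, so the conclusion survives, but the notation presumes a second derivative you do not have.
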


\section{Model Construction and Optimality Condition}\label{model}

This section is devoted to the first assignment of this paper, that is, constructing a PLS with an adaptive lasso regularization for regression parameter and then giving its dual formulation and optimality condition.

\subsection{PLS with Adaptive Lasso for Parameter Regression}

In this section, we restrict our attention to the task of regressing the coefficient $\beta$ of semiparametric regression model (\ref{sprm}).
For this purpose, we should make some preparations for concealing the unknown nonparametric function $g(\cdot)$.
Assuming that $\beta$ is known, then (\ref{sprm}) is simplified to a purely nonparametric regression model
$$
Y_{i}-X_{i}^{\top}\beta=g(T_{i})+\varepsilon_{i},\quad i=1,2,\cdots,n.
$$

In this part, we are particularly interested in using the weighting function method \cite{LTX2008, WJ2003} to estimate the nonparametric part $g(\cdot)$, that is,
$$
g_{n}(t,\beta):=\sum_{i=j}^{n}W_{ni}(t)(Y_{j}-X_{j}^{\top}\beta),
$$
where $W_{ni}(t)=K_{h}(T_{i}-t)\big/\sum_{j=1}^nK_{h}(T_{j}-t)$ is a nonnegative weighting function satisfying $0\leq W_{nj}(t)\leq 1$ and $\sum_{j=1}^nW_{nj}(t)=1$, where $K_{h}(\cdot)=K(\cdot/h)$, $K(\cdot)$ is a nonnegative kernel function and $h$ is a so-called bandwidth which is a constant sequence converging to zero.
Denote
$$
\widetilde{X}_{i}:= X_{i}-\sum_{j=1}^{n}W_{nj}(T_{i})X_{j}\quad\text{and}\quad\widetilde{Y}_{i}:= Y_{i}-\sum_{j=1}^{n}W_{nj}(T_{i})Y_{j},
$$
and replace $g(T_{i})$ in (\ref{sprm}) with $g_{n}(T_{i},\beta)$, we can get a purely parametric regression model as follows:
\begin{equation}\label{linear}
\widetilde{Y}_{i}=\widetilde{X}_{i}^{\top}\beta+\widetilde{\varepsilon}_{i}, \quad i=1,2,\cdots,n,
\end{equation}
where $\widetilde{\varepsilon}_{i}:=\varepsilon_{i}-\sum_{j=1}^{n}W_{nj}(T_{i})\varepsilon_{j}$ is the the residual estimation.
Consider the random sample for $i=1,2,\cdots,n$ as a whole, we can reformulate (\ref{linear}) as the following compact form
\begin{equation}\label{n-linear}
\widetilde{Y}=\widetilde{X}^{\top}\beta+\widetilde{\varepsilon},
\end{equation}
where $\widetilde{Y}\in\mathbb{R}^{n}$, $\widetilde{X}\in\mathbb{R}^{p\times n}$, and $\widetilde{\varepsilon}\in\mathbb{R}^{n}$.

Combining the cross-section least square method of \cite{SPECKMAN} and the adaptive lasso variable selection method for linear regression model of \cite{Zou}, we propose the PLS estimation method for regression parameter $\beta$ as follows:
\begin{equation}\label{pls}
\min_{\beta\in\mathbb{R}^p} \ \big\{\frac{1}{2}\|\widetilde{Y}-\widetilde{X}^{\top}\beta\|_2^{2}+\lambda\sum_{j=1}^{p}\omega_{j}|\beta_{j}|\big\},
\end{equation}
where $\lambda>0$ is a positive parameter and $\omega_{j}>0$ is an adaptive tuning parameter.
For convenience, we assume that the matrix $\widetilde{X}$ is normalized such that the spectral radius of $\widetilde{X}\widetilde{X}^{\top}$ is not greater than $1$, i.e. $\rho(\widetilde{X}\widetilde{X}^{\top})\leq 1$. In this case, the function $\|\widetilde{Y}-\widetilde{X}^{\top}\beta\|_2^{2}$ is convex differentiable whose gradient is $1$-Lipschitz continuous.
It should be noted that the oracle property of (\ref{pls}) can be easily attained by mimicking the proof of \cite{Zou} in which
a PLS method for a purely linear regression with adaptive lasso penalty was considered.
Let $\widehat{\beta}_{pls}:=(\widetilde{X}\widetilde{X}^\top)^{-1}\widetilde{X}\widetilde{Y}$ be the least square estimation. It is from \citep{SPECKMAN} that, $\widehat{\beta}_{pls}$ is a $\sqrt{n}$-consistent estimation to the true parameter $\beta$.
Let  $\widehat{\beta}_{ads}$ be the optimal solution of (\ref{pls}).
Then, based on the common assumptions of semiparametric regression model as listed in \cite{Pls,SPECKMAN}, we can get the oracle property of (\ref{pls}) through a series of derivation. To facilitate readers' understanding, we present the specific assumptions and unproven properties here. For more details on its proof, one may refer to \cite[Theorem 3.3]{Pls}.
\begin{assumption}\label{assump1}
	(a) Decompose $X_{n\times p}:=[X_1,\ldots,X_n]^\top$ into $X=f+\eta$, where $\eta=(\eta_{ij})_{n\times p}$ with $E(\eta_{ij}|T_i)=0$ and $\eta_{ij}$ is independent of $\varepsilon_i$, and $f=(f_{ij})_{n\times p}$ with $f_{ij}:=f_j(T_i)$ satisfying $E(X_{ij}|T_i)=f_j(T_i)$ with unknown smooth function $f_j$ ($j=1,\ldots,p$).
	Suppose that $n^{-1}\eta^{\top}\eta\xrightarrow{P}V$ (convergence in probability) with $V$.\\
   (b) Let $K_{n\times n}=(K)_{ij}$ with $K_{ij}=W_{ni}(T_j)$ and suppose that $tr(K^\top K)=O_p(h^{-1})$, $tr(K)=O_p(h^{-1})$.\\
   (c) Denote $g:=(g(T_1),\ldots,g(T_n))^\top$ and $\widetilde{g}:=(I_n-K)g$, and suppose that $\|\widetilde{g}\|_2^2=O_p(nh^4)$.\\
   (d) Suppose the bandwidth satisfies $h=O(n^{-1/5})$.
\end{assumption}

\begin{theorem}\cite[Theorem 3.3]{Pls}
Let $\beta_{\A}$ be the non-zero coefficient of the true parameter $\beta$ in (\ref{sprm}). Let
$\widehat{\beta}_{ads\mathcal{A}}$ be non-zero coefficient of $\widehat{\beta}_{ads}$.
Let $\mathcal{A}$ and $\mathcal{A}_n$ be the non-zero element index set of real value $\beta$ and estimated value $\widehat{\beta}_{ads}$, respectively. Moreover, assume that the number of non-zeros variates in $\beta$ is $p_0$, i.e., $|\mathcal{A}|=p_0$.
Suppose that $V$ is nonsingular and the tuning parameter is chosen as $\omega_{j}=|\widehat{\beta}_{pls}|^{-\gamma}$ with a $\gamma>0$. If $\lambda/\sqrt{n}\rightarrow 0$ and $\lambda n^{(\gamma-1)/2}\rightarrow\infty$. Then, under the Assumption \ref{assump1}, it holds that\\
(i) $\lim\limits_{n\rightarrow\infty}Pr(\mathcal{A}_{n}=\mathcal{A})=1$,\\
(ii) $\sqrt{n}(\widehat{\beta}_{ads\mathcal{A}}-\beta_\mathcal{A})\xrightarrow{\L}\mathcal{N}(0,\sigma^{2}V_{11}^{-1})$ (convergence in distribution), where $V_{11}\in\mathbb{R}^{p_0\times p_0}$ is a submatrix of $V$.
\end{theorem}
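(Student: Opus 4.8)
The plan is to follow the epi-convergence (``argmin'') approach to penalized $M$-estimation used by \cite{Zou}, adapting it to the profiled design $(\widetilde{X},\widetilde{Y})$ produced by the kernel weighting. Writing $\beta$ for the true parameter and using that the objective in (\ref{pls}) is convex with a shift-invariant minimizer, I would reparameterize around $\beta$ via $u=\sqrt n(\cdot-\beta)$ and study the centered random convex function
\begin{equation*}
\Psi_n(u):=\tfrac12\big\|\widetilde{Y}-\widetilde{X}^\top(\beta+u/\sqrt n)\big\|_2^2-\tfrac12\big\|\widetilde{Y}-\widetilde{X}^\top\beta\big\|_2^2+\lambda\sum_{j=1}^p\omega_j\big(|\beta_j+u_j/\sqrt n|-|\beta_j|\big),
\end{equation*}
whose unique minimizer is exactly $\widehat u_n:=\sqrt n(\widehat\beta_{ads}-\beta)$. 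Expanding the quadratic part and using $\widetilde{Y}-\widetilde{X}^\top\beta=\widetilde g+\widetilde\varepsilon$ I obtain
\begin{equation*}
\Psi_n(u)=\tfrac12 u^\top\big(\tfrac1n\widetilde{X}\widetilde{X}^\top\big)u-\big\langle\tfrac1{\sqrt n}\widetilde{X}(\widetilde g+\widetilde\varepsilon),u\big\rangle+\lambda\sum_{j=1}^p\omega_j\big(|\beta_j+u_j/\sqrt n|-|\beta_j|\big).
\end{equation*}

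Next I would establish the finite-dimensional limits of $\Psi_n$. For the smooth part I would show, under Assumption \ref{assump1}, that $\tfrac1n\widetilde{X}\widetilde{X}^\top\xrightarrow{P}V$, that the smoothing bias $\tfrac1{\sqrt n}\widetilde{X}\widetilde g\xrightarrow{P}0$, and that the score satisfies the central limit theorem $\tfrac1{\sqrt n}\widetilde{X}\widetilde\varepsilon\xrightarrow{\L}W\sim\mathcal N(0,\sigma^2V)$, so the first two terms converge to $\tfrac12 u^\top Vu-\langle W,u\rangle$. For the penalty I would split on the index set. For $j\in\mathcal A$ one has $\omega_j=|\widehat\beta_{pls,j}|^{-\gamma}\xrightarrow{P}|\beta_j|^{-\gamma}$ and $\sqrt n\big(|\beta_j+u_j/\sqrt n|-|\beta_j|\big)\to\mathrm{sgn}(\beta_j)u_j$, so each such term is $O_p(\lambda/\sqrt n)\to0$. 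For $j\notin\mathcal A$ one has $\beta_j=0$ and $\lambda\omega_j\,|u_j|/\sqrt n=\big(\lambda n^{(\gamma-1)/2}\big)\,|\sqrt n\,\widehat\beta_{pls,j}|^{-\gamma}|u_j|$; since $\sqrt n\,\widehat\beta_{pls,j}=O_p(1)$ and $\lambda n^{(\gamma-1)/2}\to\infty$, this term tends to $0$ if $u_j=0$ and to $+\infty$ otherwise. Hence $\Psi_n\xrightarrow{\L}\Psi$ where $\Psi(u)=\tfrac12 u_{\mathcal A}^\top V_{11}u_{\mathcal A}-\langle W_{\mathcal A},u_{\mathcal A}\rangle$ on $\{u:u_{\mathcal A^c}=0\}$ and $\Psi(u)=+\infty$ otherwise. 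By convexity of $\Psi_n$ and uniqueness of the minimizer $u_{\mathcal A}=V_{11}^{-1}W_{\mathcal A}$, $u_{\mathcal A^c}=0$ of $\Psi$, the argmin theorem for convex stochastic processes gives $\widehat u_n\xrightarrow{\L}\argmin\Psi$; reading off the $\mathcal A$-block and using $W_{\mathcal A}\sim\mathcal N(0,\sigma^2V_{11})$ yields $\sqrt n(\widehat\beta_{ads\mathcal A}-\beta_{\mathcal A})\xrightarrow{\L}\mathcal N(0,\sigma^2V_{11}^{-1})$, which is (ii).

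For the selection consistency (i) I would argue in two directions. The asymptotic normality in (ii) forces $\widehat\beta_{ads,j}\xrightarrow{P}\beta_j\neq0$ for $j\in\mathcal A$, so $\Pr(j\in\mathcal A_n)\to1$ on $\mathcal A$. For $j\notin\mathcal A$, suppose $\widehat\beta_{ads,j}\neq0$; the first-order optimality condition for (\ref{pls}) then reads $[\widetilde{X}(\widetilde{Y}-\widetilde{X}^\top\widehat\beta_{ads})]_j=\lambda\omega_j\,\mathrm{sgn}(\widehat\beta_{ads,j})$. Dividing by $\sqrt n$, the left-hand side is $O_p(1)$ by part (ii) and the same limits used above, whereas the right-hand side equals $\big(\lambda n^{(\gamma-1)/2}\big)|\sqrt n\,\widehat\beta_{pls,j}|^{-\gamma}\to\infty$ in probability. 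This contradiction gives $\Pr(\widehat\beta_{ads,j}\neq0)\to0$, and combined with the first direction yields $\Pr(\mathcal A_n=\mathcal A)\to1$.

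The main obstacle is not the $M$-estimation machinery, which mirrors \cite{Zou}, but the control of the nonparametric contamination introduced by the weighting operator $K$: proving $\tfrac1n\widetilde{X}\widetilde{X}^\top\xrightarrow{P}V$ and the central limit theorem for $\tfrac1{\sqrt n}\widetilde{X}\widetilde\varepsilon$ after both $\widetilde{X}$ and $\widetilde\varepsilon$ have been passed through $(I_n-K)$, and showing that the bias $\tfrac1{\sqrt n}\widetilde{X}\widetilde g$ is asymptotically negligible. A crude operator-norm bound on $\widetilde{X}$ is too lossy here, since $\|\widetilde g\|_2=O_p(n^{1/10})$ would not be killed by $n^{-1/2}$; instead I would exploit the decomposition $X=f+\eta$ with $E(\eta\mid T)=0$ together with Assumption \ref{assump1}(b)--(d), so that $\|\widetilde g\|_2^2=O_p(nh^4)=O_p(n^{1/5})$ and the trace bounds on $K$ tame the smoother, and bound $\tfrac1{\sqrt n}\eta^\top\widetilde g$ through its conditional variance, which is of order $\|\widetilde g\|_2^2/n=O_p(n^{-4/5})\to0$. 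Verifying the Lindeberg and Cram\'er--Wold conditions for the profiled score in the diverging-$p$ regime is the remaining delicate point, and is where Assumption \ref{assump1} is used in full.
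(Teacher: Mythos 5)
The paper offers no proof of its own for this statement: it simply cites \cite[Theorem 3.3]{Pls} and remarks earlier that the oracle property ``can be easily attained by mimicking the proof of \cite{Zou}'' on the profiled model (\ref{n-linear}). Your proposal is exactly that route --- the Knight--Fu/Geyer argmin limit for the centered convex objective $\Psi_n$, the $\lambda/\sqrt n\to 0$ versus $\lambda n^{(\gamma-1)/2}\to\infty$ dichotomy on the penalty, and the KKT contradiction for selection consistency --- with the genuinely semiparametric work correctly isolated in the Speckman-type lemmas ($\tfrac1n\widetilde X\widetilde X^\top\xrightarrow{P}V$, negligibility of $\tfrac1{\sqrt n}\widetilde X\widetilde g$ via the $X=f+\eta$ decomposition and Assumption \ref{assump1}, and the CLT for the profiled score), so it matches the approach the paper points to and is sound as a sketch.
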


\subsection{Dual formulation and Optimality Condition}

In this part, we analyze the theoretical properties of (\ref{pls}) from the perspective of optimization for subsequent algorithm's developments.
In order to facilitate our analysis, we introduce a pair of auxiliary variables $s:=\widetilde{Y}-\widetilde{X}^{\top}\beta$ and $z:=\beta$.
Then, the problem (\ref{pls}) is reformulated as
\begin{equation}\label{p}
\begin{array}{ll}
\min\limits_{s,z,\beta}& \frac{1}{2}\|s\|_2^{2}+\lambda\sum\limits_{j=1}^{p}\omega_{j}|z_{j}|\\[2mm]
\text{s.t.} &s=\widetilde{Y}-\widetilde{X}^{\top}\beta, \  z=\beta.
\end{array}
\end{equation}

The Lagrangian function associated with (\ref{p}) is defined by
$$
\L(s,z,\beta;u,v):=
\frac{1}{2}\|s\|_2^{2}+\lambda\sum\limits_{j=1}^{p}\omega_{j}|z_{j}|-\langle\widetilde{Y}-\widetilde{X}^{\top}\beta-s,u\rangle
-\langle z-\beta,v\rangle,
$$
where $u\in\mathbb{R}^{n}$ and $v\in\mathbb{R}^{p}$ are multipliers associated with the constraints in (\ref{p}). 
The Lagrangian dual function of problem (\ref{p}) is defined by the minimum value of the Lagrangian function over $(s,z,\beta)$, that is
\begin{align*}
D(u,v)=&\min_{s,z,\beta}\L(s,z,\beta;u,v)\\
=&\min_{s}\big\{\frac{1}{2}\|s\|_2^{2}+\langle s,u\rangle\big\}
+\min_{\beta}\big\{\langle \widetilde{X}^{\top}\beta,u\rangle+\langle\beta,v\rangle\big\}\\
&+\min_{z}\big\{\lambda\sum\limits_{j=1}^{p}\omega_{j}|z_{j}|-\langle z,v\rangle\big\}-\langle \widetilde{Y},u\rangle\\
=&\big\{-\frac{1}{2}\|u\|_2^{2}-\langle \widetilde{Y},u\rangle-\delta_{\B_{\infty}^{(\lambda\omega)}}(v)\ \big| \ \widetilde{X}u+v=\mathbf{0}_{p}\big\},
\end{align*}
where $\B_{\infty}^{(\lambda\omega)}\in \mathbb{R}^p$ is defined by $\B_{\infty}^{(\lambda\omega)}:=\B_{\infty}^{(\lambda\omega_{1})}\times\B_{\infty}^{(\lambda\omega_{2})}\times\cdots\times\B_{\infty}^{(\lambda\omega_{p})}$,
that is to say, $\B_{\infty}^{(\lambda\omega)}=\{v \ | \ |v_{j}|\leq\lambda\omega_{j}, j=1,2,\cdots,p\}$. The indicator function $\delta_{\B_{\infty}^{(\lambda\omega)}}(v)=\mathbf{0}_{p}$ means that $\delta_{\B_{\infty}^{(\lambda\omega_{j})}}(v_{j})=0$ for every $j=1, 2, \cdots, p$.

The Lagrangian dual problem of the original (\ref{p}) is defined by maximizing $D(\cdot)$ over $(u,v)$, which takes the following equivalent from
\begin{equation}\label{d}
\begin{array}{ll}
\min\limits_{u,v}  & \frac{1}{2}\|u\|_2^{2}+\langle \widetilde{Y},u\rangle+\delta_{\B_{\infty}^{(\lambda\omega)}}(v)
\\[2mm]
\text{s.t.} &\widetilde{X}u+v=\mathbf{0}_{p}.
\end{array}
\end{equation}
We note that the aforementioned assumptions on $\widetilde{X}$ implies that $\|u\|_2^{2}$ is strongly convex with modulus $1$ \cite[Proposition 12.60]{RW}.
We say that $(\bar{u},\bar{v})$ is an optimal solution of problem (\ref{d}) if there exists a combination of $(\bar{s},\bar{z},\bar{\beta})$ that is a solution of (\ref{p}) such that
the following KKT system is satisfied
\begin{equation}\label{KKT}
\left\{
\begin{array}{l}
\bar{s}+\bar{u}=\mathbf{0}_{n}, \quad \bar{z}-\bar{\beta}=\mathbf{0}_{p}, \\
\widetilde{Y}-\widetilde{X}^{\top}\bar{\beta}-\bar{s}=\mathbf{0}_{n},\\
\widetilde{X}\bar{u}+\bar{v}=\mathbf{0}_{p},\\
\bar{v} \in \lambda\partial\sum\limits_{j=1}^{p}\omega_{j}|\bar{z}_{j}|.  
\end{array}
\right.
\end{equation}
From \cite[Theorem 23.5]{RR}, we know that the KKT syetem (\ref{KKT}) can be equivalently rewritten as
\begin{equation}\label{KKT2}
\left\{
\begin{array}{l}
\widetilde{Y}-\widetilde{X}^{\top}\bar{\beta}+\bar{u}=\mathbf{0}_{n},\quad
\widetilde{X}\bar{u}+\bar{v}=\mathbf{0}_{p},\\
\bar{v}_{j}=\Pi_{\B_{\infty}^{(\lambda\omega_{j})}}(\bar{v}_{j}+\bar{\beta}_{j}), \quad  j=1,2,\cdots,p,
\end{array}
\right.
\end{equation}
where
$\bar{v}_{j}=\Pi_{\B_{\infty}^{(\lambda\omega_{j})}}(\bar{v}_{j}+\bar{\beta}_{j})
=\min\big\{\lambda\omega_{j}, \max\{\bar{v}_{j}+\bar{\beta}_{j}, -\lambda\omega_{j}\}\big\}$.

\section{SSNAL Method for Dual Problem (\ref{d})}\label{algorithm}

In this section, we consider selecting the regression parameter $\beta$ via the PLS (\ref{pls}) as well as its dual (\ref{d}).
We employ SSNAL method on (\ref{d}) where SSN is used to solve the involved semismooth equations.
\subsection{ Algorithm's Construction and Convergence Theorem}

Given $\sigma_k>0$, the augmented Lagrangian function associated with (\ref{d}) is defined by
\begin{align}\label{alf}
\L_{\sigma}(u,v;\beta)
=&\frac{1}{2}\|u\|_2^{2}+\langle\widetilde{Y}, u\rangle+\delta_{\B_{\infty}^{(\lambda\omega)}}(v)\notag\\
&-\langle\widetilde{X}u+v, \beta\rangle
+\frac{\sigma_k}{2}\|\widetilde{X}u+v\|_2^{2},
\end{align}
where $\beta\in\mathbb{R}^{p}$ is a multiplier or the $p$-variate regression coefficient in problem (\ref{pls}). 
While the SSNAL method of (\ref{al}) is employed on the problem (\ref{d}), its detailed steps can be summarized as follows:
{\small
\begin{framed}
\noindent
{\bf Algorithm SSNAL: A inexact augmented Lagrangian method for (\ref{d})}
\vskip 1.0mm \hrule \vskip 1mm
\noindent
\textbf{Step 1.} Take $\sigma_{0}>0$,
$(u^{(0)},v^{(0)},\beta^{(0)})\in\mathbb{R}^{n}\times\B_{\infty}^{(\lambda\omega)}\times\mathbb{R}^{p}$.
For $k=0,1,\ldots$, do the following operations iteratively.\\
\textbf{Step 2.} Compute
\begin{align}\label{subproblem}
&(u^{(k+1)},v^{(k+1)})\approx
\argmin_{u,v}\big\{\Psi_{k}(u,v):=\L_{\sigma_{k}}(u,v;\beta^{(k)})\big\}\\
&=\argmin_{u,v}\big\{\frac{1}{2}\|u\|_2^{2}+\langle\widetilde{Y}, u\rangle+\delta_{\B_{\infty}^{(\lambda\omega)}}(v)
-\langle\widetilde{X}u+v, \beta^{(k)}\rangle+\frac{\sigma_{k}}{2}\|\widetilde{X}u+v\|_2^{2}\big\}.\nonumber
\end{align}
\textbf{Step 3.} Compute $\beta^{(k+1)}=\beta^{(k)}-\sigma_{k}\big(\widetilde{X}u^{(k+1)}+v^{(k+1)}\big)$ and update $\sigma_{k+1}\uparrow\sigma_{\infty}\leq\infty$.
\end{framed}
}

Since the inner problem (\ref{subproblem}) is not expected to be solved exactly, we may use the standard stopping criterion studied in \citep{Rockafellar1976a,Rockafellar1976b} to derive an inexact solution, that is
\begin{equation}\label{stop}
\Psi_{k}(u^{(k+1)},v^{(k+1)})-\inf_{u,v}\Psi_{k}(u,v)\leq\pi_{k}^{2}/2\sigma_{k}, \quad  \pi_{k}\geq 0, \quad \sum\limits_{k=0}^{\infty}\pi_{k}<\infty.
\end{equation}
From \cite[Theorem 3.2]{LST}, the global convergence of SSNAL with a sketched proof can be described as follows:
\begin{theorem}
Suppose that the solution set to (\ref{pls}) is nonempty. Let $\{(u^{(k)},\\v^{(k)},\beta^{(k)})\}$ be the infinite sequence generated by SSNAL method with stopping criterion (\ref{stop}). Then, the sequence $\{\beta^{(k)}\}$ is bounded and converges to an optimal solution of (\ref{pls}). In addition, the sequence $\{(u^{(k)},v^{(k)})\}$ is also bounded and converges to the unique optimal solution $(\bar u, \bar v)\in\mathbb{R}^{n}\times\B_{\infty}^{(\lambda\omega)}$ of (\ref{d}).
\end{theorem}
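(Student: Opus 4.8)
The plan is to recognize that this statement is a direct specialization of the general SSNAL convergence result (Theorem 2.1 in the excerpt, i.e. \cite[Theorem 3.2]{LST}), so the proof reduces to casting the pair (\ref{pls})--(\ref{d}) into the abstract template (\ref{ssnp})--(\ref{ssnd}) and verifying that its standing hypotheses are met. Concretely, I would set $x=\beta$, take the linear map $\A=\widetilde{X}^{\top}$ (so that $\A^{\ast}=\widetilde{X}$), let $c=\mathbf{0}_{p}$, and define $h(w):=\frac{1}{2}\|\widetilde{Y}-w\|_2^{2}$ together with $p(\beta):=\lambda\sum_{j=1}^{p}\omega_{j}|\beta_{j}|$. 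Under this dictionary the abstract primal (\ref{ssnp}) becomes exactly (\ref{pls}).

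The first substantive step is to check the two regularity requirements imposed on $h$: it is differentiable with $\nabla h(w)=w-\widetilde{Y}$, which is globally $1$-Lipschitz, and it is (globally, hence locally) strongly convex with modulus $1$, so $L(h)=1$. The second step is to compute the Fenchel conjugates and confirm that the abstract dual (\ref{ssnd}) coincides with (\ref{d}). A short calculation gives $h^{\star}(u)=\frac{1}{2}\|u\|_2^{2}+\langle\widetilde{Y},u\rangle$ (the maximizer in the conjugation being $w=u+\widetilde{Y}$), while the duality between the weighted $\ell_1$-norm and the weighted $\ell_\infty$-ball yields $p^{\star}(v)=\delta_{\B_{\infty}^{(\lambda\omega)}}(v)$; combined with the constraint $\A^{\ast}y+z=\widetilde{X}u+v=\mathbf{0}_{p}$ this reproduces (\ref{d}). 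In particular $\dom h^{\star}=\mathbb{R}^{n}$, so $\mathrm{int}(\dom h^{\star})=\mathbb{R}^{n}$ and $\dom p^{\star}=\B_{\infty}^{(\lambda\omega)}$, which is precisely why the limit $(\bar{u},\bar{v})$ is asserted to lie in $\mathbb{R}^{n}\times\B_{\infty}^{(\lambda\omega)}$.

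Next I would match the iteration and the inexactness control. The augmented Lagrangian (\ref{alf}) is exactly $\L_{\sigma}$ of the template with multiplier $x=\beta$; Steps 2--3 of Algorithm SSNAL are the two lines of (\ref{al}) under the above identification, noting that the multiplier update $\beta^{(k+1)}=\beta^{(k)}-\sigma_{k}(\widetilde{X}u^{(k+1)}+v^{(k+1)})$ is $x^{(k+1)}=x^{(k)}-\sigma_{k}(\A^{\ast}y^{(k+1)}+z^{(k+1)}-c)$ with $c=\mathbf{0}_{p}$; and the stopping rule (\ref{stop}) is (\ref{stop-1}) with $\pi_{k}$ in place of $\varepsilon_{k}$. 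Since the solution set of (\ref{pls}) is assumed nonempty, all hypotheses of Theorem 2.1 hold, and its conclusion transfers verbatim: the multiplier sequence $\{\beta^{(k)}\}$ (the analogue of $\{x^{(k)}\}$) is bounded and converges to an optimal solution of (\ref{pls}), while $\{(u^{(k)},v^{(k)})\}$ is bounded and converges to the unique optimal pair $(\bar{u},\bar{v})$ of (\ref{d}).

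I do not expect a genuine obstacle here: the argument is a verification rather than a construction. The only point requiring mild care is the bookkeeping of the conjugation and the identification $\A^{\ast}=\widetilde{X}$, so that the abstract KKT conditions reduce to (\ref{KKT})--(\ref{KKT2}) and the uniqueness of the dual optimum is correctly inherited --- $\bar{u}$ from the strong convexity of $h^{\star}$ and $\bar{v}=-\widetilde{X}\bar{u}$ from the equality constraint.
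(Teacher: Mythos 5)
Your proposal is correct and follows essentially the same route as the paper: both arguments are verifications that the pair (\ref{pls})--(\ref{d}) instantiates the abstract SSNAL framework of Section \ref{sub2}, after which the convergence theory of \cite{LST} and \cite{Rockafellar1976a} is invoked wholesale. The only cosmetic difference is that the paper's sketch explicitly passes through Fenchel's duality theorem to obtain nonemptiness of the dual solution set and derives uniqueness of $(\bar u,\bar v)$ from the strong convexity of $\frac{1}{2}\|u\|_2^{2}$ before citing \cite[Theorem 4]{Rockafellar1976a}, whereas you delegate those steps to the already-stated Theorem 2.1 via the dictionary $h(w)=\frac{1}{2}\|\widetilde{Y}-w\|_2^{2}$, $p(\beta)=\lambda\sum_{j}\omega_{j}|\beta_{j}|$, $\A=\widetilde{X}^{\top}$, $c=\mathbf{0}_{p}$.
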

\begin{proof}
The nonempty assumption on the solution set of (\ref{pls}) indicates that the optimal value of (\ref{pls}) is finite.
Besides, by Fenchel's duality theorem \cite[Corollary 31.2.1]{RR}, the solution set to (\ref{d}) is nonempty and the optimal value of (\ref{d}) is finite and equal to the optimal value of (\ref{pls}).
That is to say, the solution set to KKT system (\ref{KKT2}) is nonempty.
By the strong convexity of $\|u\|^{2}$, the uniqueness of the optimal solution $(\bar{u},\bar{v})\in\mathbb{R}^{n}\times\B_{\infty}^{(\lambda\omega)}$ of (\ref{d}) can obtain directly.
Combining this uniqueness with \cite[Theorem 4]{Rockafellar1976a}, we can easily obtain the boundedness of $\{(u^{(k)},v^{(k)})\}$ and other desired results readily.
\end{proof}
\subsection{Solving the Augmented Lagrangian Subproblems}

This part is devoted to employing the SSN to solve the inner subproblems (\ref{subproblem}) resulting from the augmented Lagrangian method.
With fixed $\sigma>0$ and $\widetilde{\beta}\in\mathbb{R}^{p}$, it aims to solving
\begin{equation}\label{newton}
\min_{u,v}\Psi(u,v):=\L_{\sigma}(u,v;\widetilde{\beta}).
\end{equation}
It is evident that $\Psi(u,v)$ is a strongly convex function.

Combining the strong convexity of $\Psi(\cdot,\cdot)$, we have that for any $\alpha\in\mathbb{R}$, the level set $\Psi_{\alpha}:=\{(u,v)\in\mathbb{R}^{n}\times\B_{\infty}^{(\lambda\omega)}\ | \ \Psi(u,v)\leq\alpha\}$ is closed, convex and bounded, which means that (\ref{newton}) admits an unique optimal solution $(\bar{u},\bar{v})$.
For any $u\in\mathbb{R}^{n}$, denote $\psi(u):=\inf_{v}\Psi(u,v)$. Then, we have
\begin{align*}
\psi(u)=&\inf_{v}\Big\{\frac{1}{2}\|u\|_2^{2}+\langle\widetilde{Y}, u\rangle
-\frac{1}{2\sigma}\|\widetilde{\beta}\|_2^{2}+\delta_{\B_{\infty}^{(\lambda\omega)}}(v)
+\frac{\sigma}{2}\big\|v-(\widetilde{\beta}/\sigma-\widetilde{X}u)\big\|_2^{2}\Big\}\\[2mm]
=&\frac{1}{2}\|u\|_2^{2}+\langle\widetilde{Y}, u\rangle-\frac{1}{2\sigma}\|\widetilde{\beta}\|_2^{2}
+\inf_{v\in\B_{\infty}^{(\lambda\omega)}}\Big\{\frac{\sigma}{2}\big\|v-(\widetilde{\beta}/\sigma-
\widetilde{X}u)\big\|_2^{2}\Big\}\\[2mm]
=&\frac{1}{2}\|u\|_2^{2}+\langle\widetilde{Y}, u\rangle-\frac{1}{2\sigma}\|\widetilde{\beta}\|_2^{2}
+\frac{\sigma}{2}\big\|\P_{\|\cdot\|_{1}}^{\lambda\omega}(\widetilde{\beta}/\sigma-\widetilde{X}u)\big\|_2^{2},
\end{align*}
where the last equality is from $x-\Pi_{\B_{\infty}^{(\lambda\omega)}}(x)=x-\P_{\delta_{\B_{\infty}^{(\lambda\omega)}}}(x)=\P_{\|\cdot\|_{1}}^{\lambda\omega}(x)$,
$\big(\P_{\|\cdot\|_{1}}^{\lambda\omega}(\widetilde{\beta}/\sigma-\widetilde{X}u)\big)_{j}=
\P_{|\cdot|}^{\lambda\omega_{j}}(\widetilde{\beta}_{j}/\sigma-(\widetilde{X}u)_{j})
=\text{sgn}(\widetilde{\beta}_{j}/\sigma-(\widetilde{X}u)_{j})\odot\max\big\{|\widetilde{\beta}_{j}/\sigma-(\widetilde{X}u)_{j}|
-\lambda\omega_{j}, 0\big\}$ for any $j=1,2,\cdots,p$.
Therefore, if $(\bar{u},\bar{v})=\argmin\Psi(u,v)$, then we can get that
$$
\bar{u}=\argmin\psi(u) \quad\text{and}\quad \bar{v}=\Pi_{\B_{\infty}^{(\lambda\omega)}}(\widetilde{\beta}/\sigma-\widetilde{X}\bar{u}),
$$
where
$\bar{v}_{j}=\Pi_{\B_{\infty}^{(\lambda\omega_{j})}}(\widetilde{\beta}_{j}/\sigma-(\widetilde{X}\bar{u})_{j})
=\min\big\{\lambda\omega_{j}, \max\{\widetilde{\beta}_{j}/\sigma-(\widetilde{X}\bar{u})_{j}, -\lambda\omega_{j}\}\big\}$
for any $j=1, 2, \cdots, p$.
Note that $\psi(\cdot)$ is strongly convex and continuously differentiable with gradient
$$
\nabla\psi(u)=u+\widetilde{Y}-\widetilde{X}^{\top}\P_{\|\cdot\|_{1}}^{\sigma\lambda\omega}(\widetilde{\beta}-\sigma\widetilde{X}u),
$$
then $\bar{u}$ can be obtained by solving the nonsmooth equation
\begin{equation}\label{gra}
\nabla\psi(u)=\mathbf{0}.
\end{equation}

Let $u\in\mathbb{R}^{n}$ be any given point, define
$$
\widehat{\partial}^{2}\psi(u)=I_{n}
+\sigma \widetilde{X}^{\top}\partial\P_{\|\cdot\|_{1}}^{\sigma\lambda\omega}(\widetilde{\beta}-\sigma\widetilde{X}u)\widetilde{X},
$$
where $\partial\P_{\|\cdot\|_{1}}^{\sigma\lambda\omega}(\widetilde{\beta}-\sigma\widetilde{X}u)$ is the Clarke subdifferential of the Lipschitz continuous mapping $\P_{\|\cdot\|_{1}}^{\sigma\lambda\omega}(\cdot)$ at point $\widetilde{\beta}-\sigma\widetilde{X}u$.
From \cite[Proposition 2.3.3 and Theorem 2.6.6]{FH}, we know that
$$
\partial^{2}\psi(u)d\subseteq\widehat{\partial}^{2}\psi(u)d, \quad \forall d\in \mathbb{R}^{n},
$$
where $\partial^{2}\psi(u)$ is the generalized Hessian of $\psi$ at $u$.
Define
\begin{equation}
H:=I_{n}+\sigma\widetilde{X}^{\top}\Theta\widetilde{X},
\end{equation}
with $\Theta\in\partial\P_{\|\cdot\|_{1}}^{\sigma\lambda\omega}(\widetilde{\beta}-\sigma\widetilde{X}u)$.
Then, we have $H\in\widehat{\partial}^{2}\psi(u)$.
Note that $I_{n}$ is a $n$-dimensional identity matrix and $\Theta$ is a sparse $0$-$1$ diagonal matrix. It then follows that $H$ is symmetric and positive definite.

It is widely known that continuous piecewise affine functions and twice continuously differentiable functions are all strongly semismooth everywhere. Therefore, $\P_{\|\cdot\|_{1}}^{\sigma\lambda\omega}(\cdot)$ is strongly semismooth. Thus, we can employ the SSN algorithm to solve the semismooth nonlinear equations (\ref{gra}).
The convergence results for SSN algorithm are stated in the following theorem.
\begin{theorem}
Let the sequence $\{u^{(l)}\}$ be generated by SSN algorithm. Then $u^{(l)}$ converge to the unique optimal solution $u^\infty\in\mathbb{R}^{n}$ of the problem in (\ref{gra}) and
$$
\|u^{(l+1)}-u^\infty\|_2=\O\big(\|u^{(l)}-u^\infty\|_2^{1+\varrho}\big),
$$
where $\varrho\in(0,1]$.
\end{theorem}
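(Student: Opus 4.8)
The plan is to verify the two structural hypotheses required by the abstract local convergence theorem for the semismooth Newton method (cf.\ \cite[Theorem 3.5]{LST}, stated as the second convergence theorem of Section \ref{presult}) and then invoke that theorem directly. Those hypotheses are: (i) the residual map $\nabla\psi(\cdot)$ is strongly semismooth at the solution $u^\infty$; and (ii) every element of the surrogate generalized Hessian $\widehat{\partial}^2\psi(u^\infty)$ used in the Newton step is nonsingular. Existence and uniqueness of the root $u^\infty$ of (\ref{gra}) is already settled: the corollary above shows that $\Psi$, and hence $\psi(u)=\inf_v\Psi(u,v)$, is strongly convex, so $\psi$ has a unique minimizer, which (by continuous differentiability of $\psi$) is characterized by $\nabla\psi(u^\infty)=0$.

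First I would establish (i). As noted just before the statement, the soft-thresholding map $\P_{\|\cdot\|_{1}}^{\sigma\lambda\omega}(\cdot)$ acts component-wise through the piecewise-affine rule $\text{sgn}(\cdot)\odot\max\{|\cdot|-\sigma\lambda\omega_{j},0\}$, so it is a continuous piecewise affine function and therefore strongly semismooth everywhere. The inner argument $u\mapsto\widetilde{\beta}-\sigma\widetilde{X}u$ is affine and the outer multiplication by $\widetilde{X}^{\top}$ is linear; since affine maps are trivially strongly semismooth and strong semismoothness is preserved under composition and addition, the full residual $\nabla\psi(u)=u+\widetilde{Y}-\widetilde{X}^{\top}\P_{\|\cdot\|_{1}}^{\sigma\lambda\omega}(\widetilde{\beta}-\sigma\widetilde{X}u)$ is strongly semismooth. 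This yields the exponent $\varrho=1$ in the strongly semismooth case; the statement is phrased with a general $\varrho\in(0,1]$ so as to also cover a merely $\varrho$-order semismooth situation.

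Next I would establish (ii), which reuses the computation preceding the theorem. Any element of $\widehat{\partial}^{2}\psi(u)$ has the form $H=I_{n}+\sigma\widetilde{X}^{\top}\Theta\widetilde{X}$ with $\Theta\in\partial\P_{\|\cdot\|_{1}}^{\sigma\lambda\omega}(\widetilde{\beta}-\sigma\widetilde{X}u)$ a diagonal $0$--$1$ matrix. Since $\Theta\succeq 0$, the matrix $\widetilde{X}^{\top}\Theta\widetilde{X}$ is positive semidefinite, whence $H\succeq I_{n}\succ 0$; thus every such $H$ is symmetric positive definite and in particular nonsingular, uniformly in $u$ (indeed $\|H^{-1}\|_2\leq 1$). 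Together with the inclusion $\partial^{2}\psi(u)\,d\subseteq\widehat{\partial}^{2}\psi(u)\,d$ recorded above via \cite[Proposition 2.3.3 and Theorem 2.6.6]{FH}, this guarantees that the surrogate Jacobian chosen in the Newton step is invertible with uniformly bounded inverse.

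With (i) and (ii) in hand, the conclusion follows from the abstract theorem: near $u^\infty$ the Newton iteration $u^{(l+1)}=u^{(l)}-H_{l}^{-1}\nabla\psi(u^{(l)})$ is well defined, the iterates converge to $u^\infty$, and the asserted rate $\|u^{(l+1)}-u^\infty\|_2=\O(\|u^{(l)}-u^\infty\|_2^{1+\varrho})$ holds. The main obstacle I anticipate is justifying the replacement of the true generalized Hessian $\partial^{2}\psi(u)$ by the surrogate $\widehat{\partial}^{2}\psi(u)$ in the Newton step: one must check that the approximation bound underlying semismooth Newton convergence, namely $\|\nabla\psi(u^{(l)})-H_{l}(u^{(l)}-u^\infty)\|_2=\O(\|u^{(l)}-u^\infty\|_2^{1+\varrho})$, survives when $H_{l}$ is drawn from $\widehat{\partial}^{2}\psi(u^{(l)})$ rather than from the Clarke generalized Jacobian of $\nabla\psi$ at $u^{(l)}$. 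This is precisely where the inclusion from \cite{FH} and the strong semismoothness of $\P_{\|\cdot\|_{1}}^{\sigma\lambda\omega}$ (with respect to its own Clarke subdifferential) combine, and it is the only step that is not a routine verification.
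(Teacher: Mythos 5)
Your proposal is correct and follows essentially the same route as the paper, which (without writing out a formal proof) establishes exactly your two hypotheses --- strong semismoothness of $\P_{\|\cdot\|_{1}}^{\sigma\lambda\omega}(\cdot)$ as a continuous piecewise affine map, and symmetric positive definiteness of every $H=I_{n}+\sigma\widetilde{X}^{\top}\Theta\widetilde{X}\in\widehat{\partial}^{2}\psi(u)$ --- and then appeals to \cite[Theorem 3.5]{LST}. Your additional remarks on uniqueness of $u^\infty$ via strong convexity of $\psi$ and on the surrogate-Jacobian inclusion from \cite{FH} are consistent with, and slightly more explicit than, the paper's presentation.
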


We now discuss the implementation of stopping criterion (\ref{stop}) for SSN algorithm to solve the subproblem (\ref{subproblem}) in SSNAL.
In fact, we notice that
$$
\Psi_{k}(u^{(k+1)},v^{(k+1)})=\inf_{v}\Psi_{k}(u^{(k+1)},v)=\psi_{k}(u^{(k+1)}),
$$
$$
\inf\Psi_{k}=\inf_{u,v}\Psi_{k}(u,v)=\inf_{u}\inf_{v}\Psi_{k}(u,v)=\inf_{u}\psi_{k}(u)=\inf\psi_{k},
$$
which implies that $\Psi_{k}(u^{(k+1)},v^{(k+1)})-\inf\Psi_{k}=\psi_{k}(u^{(k+1)})
-\inf\psi_{k}$.
Let $\widehat{u}=\argmin\psi_{k}(u)$, by the strong convexity of $\psi_{k}$, we have
$$
\psi_{k}(\widehat{u})-\psi_{k}(u^{(k+1)})\geq\langle \nabla\psi_{k}(u^{(k+1)}), \widehat{u}-u^{(k+1)}\rangle
+\frac{1}{2}\|\widehat{u}-u^{(k+1)}\|_2^{2},
$$
then
\begin{align*}
\psi_{k}(u^{(k+1)})-\psi_{k}(\widehat{u})\leq
&-\Big(\langle\nabla\psi_{k}(u^{(k+1)}), \widehat{u}-u^{(k+1)}\rangle+\frac{1}{2}\|\widehat{u}-u^{(k+1)}\|_2^{2}\Big)\\[2mm]
=&-\frac{1}{2}\big\|\widehat{u}-u^{(k+1)}+\nabla\psi_{k}(u^{(k+1)})\big\|_2^{2}
+\frac{1}{2}\big\|\nabla\psi_{k}(u^{(k+1)})\big\|_2^{2}\\[2mm]
\leq &\frac{1}{2}\big\|\nabla\psi_{k}(u^{(k+1)})\big\|_2^{2}.
\end{align*}
Therefore, we know
$$
\Psi_{k}(u^{(k+1)},v^{(k+1)})-\inf\Psi_{k}=\psi_{k}(u^{(k+1)})
-\inf\psi_{k}\leq \frac{1}{2}\big\|\nabla\psi_{k}(u^{(k+1)})\big\|_2^{2}.
$$
The stopping criterion (\ref{stop}) can be achieved by the following implementable criterion
\begin{equation}
\big\|\nabla\psi_{k}(u^{(k+1)})\big\|_2\leq\sqrt{1/\sigma_{k}}\pi_{k}, \quad \pi_{k}\geq0, \quad
\sum_{k=0}^{\infty}\pi_{k}<\infty.
\end{equation}
That is, the stopping criterion (\ref{stop}) will be satisfied as long as $\|\nabla\psi_{k}(u^{(k+1)})\|_2$ is sufficiently small.

In summary, the iterative framework of SSNAL method for dual problem (\ref{d}) can be listed as follows:
{\small
\begin{framed}
\noindent
{\bf Algorithm: SSNAL}
\vskip 1.0mm \hrule \vskip 1mm
\noindent
\begin{itemize}
\item[Step 0.] Given $\sigma_{0}>0$, $\mu\in(0,1/2)$, $\bar{\eta}\in(0,1)$, $t\in(0,1]$, and $\rho\in(0,1)$. Choose
$(u^{(0)},v^{(0)},\beta^{(0)})\in\mathbb{R}^{n}\times\B_{\infty}^{(\lambda\omega)}\times\mathbb{R}^{p}$.
For $k=0,1,\ldots$, do the following operations iteratively.
\item[Step 1.] Choose $\tilde{u}^{(0)}:=u^{(k)}$. Do the following operations iteratively.
\begin{itemize}
\item[Step 1.1.] Choose $\Theta_{l}\in\partial\P_{\|\cdot\|_{1}}^{\sigma\lambda\omega}(\beta^{(k)}-\sigma_k\widetilde{X}\tilde{u}^{(l)})$.
Let $H_{l}:=I_{n}+\sigma_{k}\widetilde{X}^{\top}\Theta_{l}\widetilde{X}$.
\item[Step 1.1.] Solve the linear system
\begin{equation}\label{sublin}
H_{l}d+\nabla\psi(\tilde{u}^{(l)})=\mathbf{0}
\end{equation}
exactly or by the conjugate gradient (CG) algorithm to find $d^{i}$ such that
$$
\big\|H_{l}d^{l}+\nabla\psi(\tilde{u}^{(l)})\big\|_2\leq\min(\bar{\eta},\|\nabla\psi(\tilde{u}^{(l)})\|_2^{1+t}).
$$
\item[Step 1.2.] (Line search) Set $\alpha_{l}=\rho^{m_{l}}$, where $m_{l}$ is the first nonnegative integer $m$ such that
$$
\tilde{u}^{(l)}+\rho^{m}d^{l}\in\mathbb{R}^{n} \quad \text{and} \quad \psi(\tilde{u}^{(l)}+\rho^{m}d^{l})\leq\psi(\tilde{u}^{(l)})+\mu\rho^{m}\langle\nabla\psi(\tilde{u}^{(l)}),d^{l}\rangle.
$$
\item[Step 1.3.] Compute
$$
\tilde{u}^{(l+1)}=\tilde{u}^{(l)}+\alpha_{i}d^{l}.
$$
\end{itemize}
\item[Step 2.] Let $u^{(k+1)}:=\tilde{u}^{(l+1)}$ and compute $v^{(k+1)}$ component-wise via
$$
v^{(k+1)}_{j}=\min\Big\{\lambda\omega_{j}, \max\big\{\beta^{(k)}_{j}/\sigma_{k}-(\widetilde{X}u^{(k+1)})_{j}, -\lambda\omega_{j}\big\}\Big\},
\quad j=1,2,\cdots,p.
$$
\item[Step 3.] Compute
$$
\beta^{(k+1)}=\beta^{(k)}-\sigma_{k}\big(\widetilde{X}u^{(k+1)}+v^{(k+1)}\big),
$$
and update $\sigma_{k+1}\uparrow\sigma_{\infty}<\infty$.
\end{itemize}
\end{framed}
}

At the end of this section, similarly to \cite{LST}, we significantly reduce the computational cost by performing an in-depth analysis of the coefficient matrix in (\ref{sublin}). Specifically, (\ref{sublin}) essentially has the following form:
\begin{equation}\label{New-lin}
(I_{n}+\sigma\widetilde{X}^{\top}\Theta\widetilde{X})d=-\nabla\psi(u),
\end{equation}
where the cost of computing $\widetilde{X}^{\top}\Theta\widetilde{X}$ is $\O(n^2p)$.
Denote $\Theta:=\diag(\theta)$, and the $i$-th diagonal element $\theta_i$ is given by
$$
\theta_i=
\left\{
\begin{array}{l}
1, \quad  |\theta_{i}|>\sigma\lambda\omega_i, \\
0, \quad  |\theta_{i}|\leq\sigma\lambda\omega_i,
\end{array}
\right.
\quad  i=1,2,\cdots,p.
$$
Obviously, $\Theta$ is a special diagonal matrix with elements $0$ or $1$ on its diagonal position.
Let $\D$ be the index set such that $\Theta_{ii}=1$, i.e., $\D:=\{i\ | \ |\theta_i|>\sigma\lambda\omega_i, i=1,2,\cdots,p\}$, and the cardinality of $\D$ is denoted by $r$, i.e., $r=|\D|$.
Let $\widetilde{X}_{\D}\in\mathbb{R}^{r\times n}$ be the submatrix of $\widetilde{X}$ with rows in $\D$.
Then, we have
$$
\widetilde{X}^{\top}\Theta\widetilde{X}=(\widetilde{X}^\top\Theta)(\widetilde{X}^\top\Theta)^\top=\widetilde{X}^\top_{\D}\widetilde{X}_{\D},
$$
which means the cost of computing $\widetilde{X}^{\top}\Theta\widetilde{X}$ is reduced to $\O(n^2r)$.
The inverse of $I_{n}+\sigma\widetilde{X}^{\top}\Theta\widetilde{X}$ admits an explicit form \citep{SMW} of
$$
(I_{n}+\sigma\widetilde{X}^{\top}\Theta\widetilde{X})^{-1}=(I_{n}+\sigma\widetilde{X}^\top_{\D}\widetilde{X}_{\D})^{-1}
=I_n-\widetilde{X}_{\D}^{\top}(\sigma^{-1}I_r+\widetilde{X}_{\D}\widetilde{X}_{\D}^{\top})^{-1}\widetilde{X}_{\D},
$$
which is determined by inverting a much smaller $r\times r$ matrix.
In this case, the total computational cost for solving the Newton linear system (\ref{New-lin}) is $\O(r^2(n+r))$, which is greatly reduced because $r$ is sufficiently small.
\section{Numerical Experiments}\label{numer}

In this section, we use random synthetic and real data to highlight the advantages of the semiparametric regression method (\ref{pls}) with adaptive lasso penalty and
highlight the numerical performance of SSNAL method.
Specifically, we consider both low-dimensional ($p<n$) and high-dimensional ($p>n$) cases in the simulation experiments.
In each case, we use an example to illustrate the progressiveness  of the SSNAL method, and then test against the popular ADMM for performance comparison.
We also test SSNAL and ADMM with a real data set to evaluate the algorithms' practical performance.
All the experiments are performed with Microsoft Windows 10 and MATLAB R2019a, and run on a PC with an Intel Core i7-9700 CPU at 3.00 GHz and 16 GB of memory.

\subsection{Brief Description of ADMM  for Problem (\ref{d})}

The ADMM aims to minimize the augmented Lagrangian function (\ref{alf}) with respect to $u$, then to $v$, and update the multiplier $\beta$ immediately. That is,
$$
\left\{
\begin{array}{l}
u^{(k+1)}=\argmin\limits_{u}\L_{\sigma}(u,v^{(k)};\beta^k),\\[3mm]
v^{(k+1)}=\argmin\limits_{v}\L_{\sigma}(u^{k+1},v;\beta^{(k)}),\\[3mm]
\beta^{(k+1)}=\beta^{(k)}-\tau\sigma(\widetilde{X}u^{(k+1)}+v^{(k+1)}),
\end{array}
\right.
$$
where $\tau\in(0,\frac{1+\sqrt{5}}{2})$ is the step size.
It is trivial to deduce that each subproblem admits an explicit solution, which makes the algorithms easily implementable.
The iterative framework of ADMM for problem (\ref{d}) is as follows, with implementation details omitted for the sake of simplicity.
It should be noted that, in the following test, we choose $\tau=1.618$, which has been numerically proven to achieve better performance.
Lastly, for the convergence of ADMM, one may refer to \cite[Theorem B1]{SEMP13}.

\begin{framed}
\noindent
{\bf Algorithm: ADMM}
\vskip 1.0mm \hrule \vskip 1mm
\noindent
\textbf{Step 0.} Given $\sigma>0$ and $\tau\in(0,\frac{1+\sqrt{5}}{2})$. Choose
$(v^{(0)},\beta^{(0)})\in\B_{\infty}^{(\lambda\omega)}\times\mathbb{R}^{p}$.
For $k=0,1,\ldots$, do the following operations iteratively.\\[2mm]
\textbf{Step 1.} Compute
$$
u^{(k+1)}=(I+\sigma\widetilde{X}^{\top}\widetilde{X})^{-1}(\widetilde{X}^{\top}\beta^{(k)}-\widetilde{Y}-\sigma\widetilde{X}^{\top}v^{(k)}).
$$
\textbf{Step 2.} Compute
$$
v^{(k+1)}_{j}=\min\big\{\lambda\omega_{j}, \max\{\beta^{(k)}_{j}/\sigma-(\widetilde{X}u^{(k+1)})_{j}, -\lambda\omega_{j}\}\big\},
\quad j=1,2,\cdots,p.
$$
\textbf{Step 3.} Update
$$
\beta^{(k+1)}=\beta^{(k)}-\tau\sigma(\widetilde{X}u^{(k+1)}+v^{(k+1)}).
$$
\end{framed}

\subsection{Simulation Study}

\subsubsection{Experiments' Setup}

The values of bandwidth $h$, parameter $\lambda$, and weight vector $\omega$ play key roles in the implementation of SSNAL and ADMM.
The bandwidth $h$ is selected by means of cross-validation criterion.  For more details on selecting the bandwidth, one may refer to the book of \cite{FG1996}.
There are many effective methods to select the parameter $\lambda$, e.g., \cite{JJL2015,WLT2007}.
In this test, we follow the continuation technique of \cite{JJL2015} to set an interval $[\lambda_{\min}, \lambda_{\max}]$, where $\lambda_{\max}=\frac{1}{2}\|\widetilde{X}\widetilde{Y}\|^2_{\infty}$ and $\lambda_{\min}=1e-10\lambda_{\max}$.
Then, we employ an equally distributed partition on a logarithmic scale to divide this interval into 200 subintervals. We then use BIC \citep{KK2007} and HBIC \citep{WKL2013} to select a proper regularization parameter $\lambda$ for low-dimensional and high-dimensional cases, respectively.
For the weight vector $\omega$, a smaller weight for larger $|\beta_j|$ corresponds to a smaller bias or even an unbiased estimator, while a larger weight for smaller $|\beta_j|$ leads to a more simplified model.
Inspired by the work of \cite{Zou}, we select $\omega$ by two different approaches.
For low-dimensional case, we denote $\widehat{\beta}_{LS}:=(\widetilde{X}\widetilde{X}^{\top})^{-1}\widetilde{X}\widetilde{Y}$ and then choose $\omega_j=|(\widehat{\beta}_{LS})_j|^{-2}, j=1, 2, \cdots, p$.
For high-dimensional case,  we denote $\mathcal{J}:=\{j | (\beta^*)_j\neq 0\}$ and let $\widetilde{X}_{\mathcal{J}}\in \mathbb{R}^{|\mathcal{J}|\times n}$, and then generate $\widehat{\beta}_{LS}$ by $\widehat{\beta}_{LS}(\mathcal{J})=(\widetilde{X}_{\mathcal{J}}\widetilde{X}_{\mathcal{J}}^{\top})^{-1}\widetilde{X}_{\mathcal{J}}\widetilde{Y}$.
We set the remaining elements in $\widehat{\beta}_{LS}$ are all $1e-3$ and then choose $\omega_j=|(\widehat{\beta}_{LS})_j|^{-2}$ for $j=1, 2, \cdots, p$.
In this experiment, we generate $T_i (i=1, \cdots, n)$ from the uniform distribution on $[0,1]$ and generate the random errors $\varepsilon_i \sim \mathcal{N}(0,1)$. We uniformly use the kernel function $K(x)=\frac{3}{4}(1-x^2)$ with $|x|\leq 1$. For other parameters in SSNAL, we choose $\mu=0.1$, $\rho=0.8$, $\sigma_0=0.01$, and the largest $\sigma_\infty$ is $2$. Other parameters' values will be given when they occurs.

Recalling that the task of the SSN method, as stated in Step 1, is to solve the nonsmooth equations $\nabla\psi(u)=\mathbf{0}$.
In this test, we terminate the inner loop when $\nabla \psi(u^{k+1})<10^{-6}$ to produce an inexact solution.
Additionally, according to the KKT condition in (\ref{KKT2}), the stopping rule for SSNAL and ADMM is set as follows:
\begin{equation}\label{kkts}
\text{Res}:=\frac{\|\beta^{(k)}-\P_{\|\cdot\|_1}^{\lambda \omega}(\beta^{(k)}-\widetilde{X}(\widetilde{X}^{\top}\beta^{(k)}-\widetilde{Y}))\|_2}{1+\|\beta^{(k)}\|_2+\|\widetilde{X}^{\top}\beta^{(k)}-\widetilde{Y}\|_2}<10^{-6},
\end{equation}
where $\text{Res}$ is regarded as the relative KKT residual.
Moreover, the iterative process will be forcefully terminated when the maximum number of iterations ($20$ for SSNAL and $2000$ for ADMM) is reached without achieving convergence. In addition, to evaluate the performance of each algorithm, we mainly use the following metrics:\
\begin{itemize}
  \item \text{NNZ}: the estimated number of non-zero elements which is defined by $\text{NNZ}:=\min\Big\{k| \sum_{i=1}^{k}|\widetilde{\beta}_i|\geq 0.999 \|\widetilde{\beta}\|_1\Big\}$ where $\widetilde{\beta}$ is obtained by sorting the estimated $\bar{\beta}$ such that $|\bar{\beta}_1|\geq \cdots \geq |\bar{\beta}_p|$;
  \item \text{ReErr}: the relative error which is defined by $\text{ReErr}:=\frac{\|{\beta}^*-\bar{\beta}\|_2}{\|\beta^*\|_2}$;
  \item \text{Res}: the KKT residual;
  \item \text{Time(s)}: the running time in second;
  \item \text{Iter}: the number of iterations.
\end{itemize}
At last, we emphasize that all numerical results listed in this section are the averages of $20$ repeated experiments.

\subsubsection{Low-dimensional Case (p$<$n)}

In this part, we generate the matrix $X=(X_1,\cdots,X_n)\in \mathbb{R}^{p\times n}$ by the way that each column of $X$ comes from $\mathcal{N}(0,\Sigma)$, where $\Sigma_{i,j}=0.7^{|i-j|}, 1\leq i,j \leq n$.
The measurable function in model (\ref{sprm}) is selected as $g(x)=\sin(2 \pi x)$ with $x\in [0,1]$.

The first task is to visibly evaluate the effectiveness of SSNAL for low-dimensional regression problems.
For our purpose, we consider the simulation results with $n=10000$ and $p=500$. In this test, we consider the case where the underlying regression coefficient $\beta^*$ in model (\ref{sprm}) only contains $10$ number of non-zero component with fixed position, that is
$\beta^*_i=0$ except for $\beta^*_{55}=9$, $\beta^*_{83}=-5$, $\beta^*_{96}=-7$, $\beta^*_{251}=3$, $\beta^*_{315}=-6$, $\beta^*_{368}=1$, $\beta^*_{404}=10$, $\beta^*_{456}=-8$, $\beta^*_{465}=2$, and $\beta^*_{482}=7$. We show the results estimated by SSNAL in the form of a box plot in Figure~\ref{fig1}, where the boxes reflect the dispersion for estimated regression coefficients from 20 experiments. It can be clearly seen from this plot that, in this low-dimensional case, SSNAL can accurately find the positions of the non-zero elements, and can almost correctly estimate the values of the non-zero elements. Additionally, to further highlight the estimation performance of the SSNAL algorithm, we provide the values of relative error and variance above Figure~\ref{fig1}. The very small values of these metrics indicate that SSNAL can produce highly accurate and stable results in this low-dimensional scenario.

\begin{figure}[h]
\centering
\includegraphics[width=4.5in]{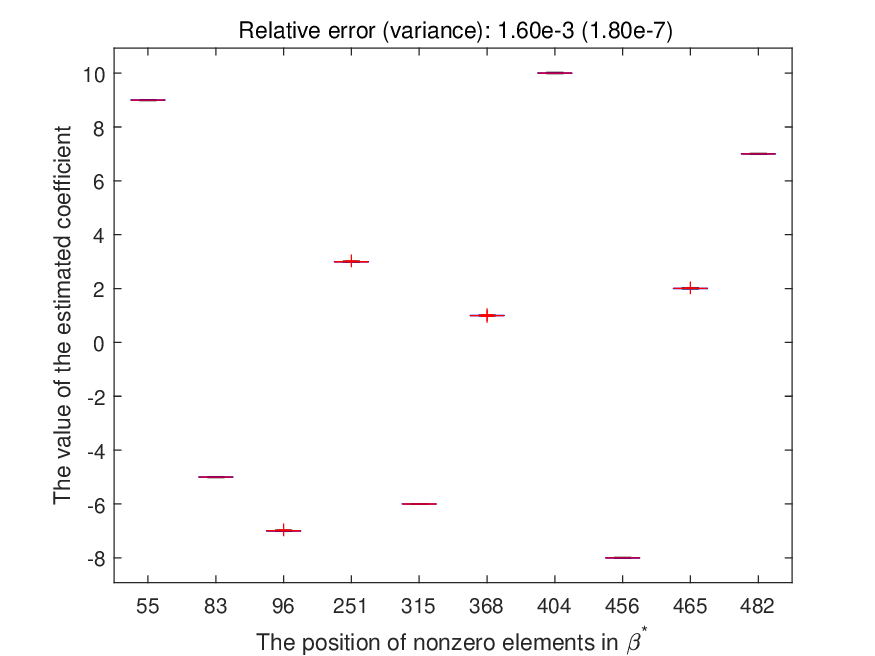}
\caption{{\small The calculation effect of SSNAL with $n=10000$, $p=500$, and $\lambda=0.5$.}}
\label{fig1}
\end{figure}

The second task is to compare the performance of SSNAL and ADMM for (\ref{pls}) with the standard lasso penalty and adaptive lasso penalty, named SSNAL$_a$, SSNAL$_l$, ADMM$_a$ and ADMM$_l$, respectively.
In this test, the true coefficient $\beta^*$ is generated by setting some components to be uniformly distributed within an interval, while others are set to zero.
The fixed interval is $[0,20]$ and the number of non-zero elements in $\beta^*$ is $20$.
For model (\ref{sprm}), the number of samples is set to $n=1000$ and the dimension is set to $p=500$.
We run SSNAL and ADMM $20$ times, and the average results are listed in Table~\ref{tab:1}.

\begin{table}[h]
\centering
\caption{Comparison results of SSNAL and ADMM  with $n=1000$ and $p=500$.}
\label{tab:1}
\setlength{\tabcolsep}{1.5pt}
\begin{tabular}{c|ccccc}
\hline\noalign{\smallskip}
Methods & ReErr & NNZ & Res & Time(s) & Iter\\
\noalign{\smallskip}\hline\noalign{\smallskip}
SSNAL$_a$ & 7.50e-3 (1.40e-3)& 20 (0) & 3.71e-8 (8.95e-9) & 0.29 (0.07) & 4 (0) \\
\specialrule{0em}{3pt}{3pt}
SSNAL$_l$  & 2.37e-2 (3.90e-3)& 21.6 (1.51) & 3.44e-8 (7.79e-9) & 1.11 (0.20) & 5 (0)\\
\specialrule{0em}{3pt}{3pt}
ADMM$_a$ & 7.50e-3 (1.40e-3) & 20 (0) & 9.89e-7 (5.69e-9)   & 73.68 (6.88) & 608.50 (23.38)\\
\specialrule{0em}{3pt}{3pt}
ADMM$_l$  & 2.37e-2 (3.90e-3) & 21.6 (1.51) & 9.85-7 (5.35e-9) & 70.84 (3.26) & 604.40 (16.19)\\
\noalign{\smallskip}\hline
\end{tabular}
\end{table}

From the table, we can see that the values of $\text{ReErr}$ obtained using adaptive lasso are consistently lower than those obtained with standard lasso. The methods using adaptive lasso successfully identify all the non-zero components, whereas standard lasso fails to do so. This phenomenon is consistent with the well-known theoretical results in the literature, which indicate that adaptive lasso has desirable oracle properties. Additionally, both SSNAL and ADMM successfully estimate the regression coefficients within a finite number of iterations. The last two columns show that the computing time and the number of iterations needed by SSNAL is greatly less than those of ADMM, which demonstrates that the PLS method is effective and SSNAL is highly advanced.

\subsubsection{High-dimensional Case (p$>$n)}

In this part, we generate a $p\times n$ random Gaussian matrix $\bar{X}$ whose entries are i.i.d. $\sim\mathcal{N}(0,1)$. Then the design matrix $X$ is generated by setting $X_1=\bar{X}_1$, $X_n=\bar{X}_n$, and $X_j=\bar{X}_j+0.7*(\bar{X}_{j+1}+\bar{X}_{j-1})$ for $j=2, \cdots, n-1$.
Different to the lower-dimension case, the measurable function in model (\ref{sprm}) is chosen as $g(x)=\cos(2 \pi x)$ with $x\in [0,1]$.

The first task in this part is to illustrate the effectiveness of SSNAL in a high-dimensional case, i.e., $n=300$ and $p=10000$.
In this test, we consider the case where the underlying regression coefficient $\beta^*$ in model (\ref{sprm}) only contains $10$ number of non-zero component with fixed position, that is
$\beta^*_i=0$  except for $\beta^*_{104}=5$, $\beta^*_{572}=2$, $\beta^*_{1746}=-4$, $\beta^*_{2947}=-3$, $\beta^*_{4065}=-5$, $\beta^*_{5092}=4$, $\beta^*_{5112}=-1$, $\beta^*_{6680}=1$, $\beta^*_{7979}=-2$, and $\beta^*_{8460}=3$.
The parameters' values used in SSNAL and ADMM are set as the same as the test previously.
Besides, we also run both algorithms $20$ times randomly and draw the box plot for the estimated coefficients in Figure \ref{fig2}.
It can be seen clearly that the variables are selected correctly and their estimated values are almost accurate. The values of relative error and variance further validate the algorithm's high precision and stability. Therefore, this simple test demonstrates the strong performance of SSNAL in high-dimensional studies.

\begin{figure}[h]
\centering
\includegraphics[width=4.5in]{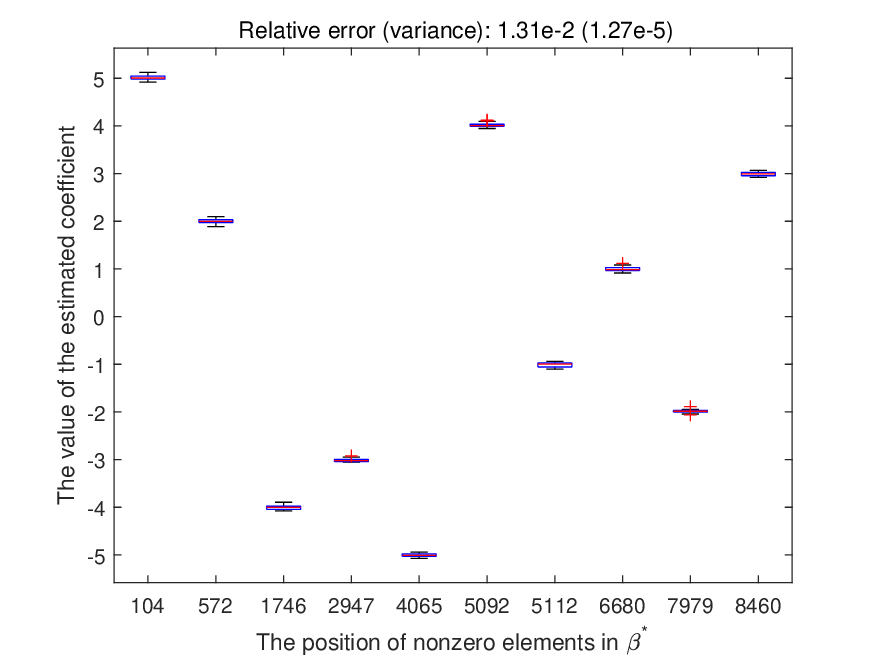}
\caption{{\small The calculation effect of SSNAL with $n=300$, $p=10000$ and $\lambda=0.1$.}}
\label{fig2}
\end{figure}

The second task is to illustrate the numerical advantages of SSNAL over ADMM in high-dimensional settings with adaptive lasso and lasso penalties.
In this test, we choose $n=500$ and $p=1000$. We set $a=5 \sqrt{2\log(p)/n}$ and $b=100 a$ to construct an interval such that
there are $20$ nonzero elements of underlying regression coefficient $\beta^*$ uniformly distributed in the interval $[a,b]$.
As in the previous test, we run SSNAL and ADMM $20$ times to estimate the regression coefficient $\beta^*$, and the positions of these non-zero components are assigned randomly at each time.
The average results regarding to ReErr, NNZ, Res, Time(s), and Iter are recorded in Table~\ref{tab:2}.
From the table, it is clear that SSNAL significantly outperforms ADMM in terms of computation time and iteration count, while achieving competitive accuracy.

\begin{table}[h]
\centering
\caption{Average results of SSNAL and ADMM with $n=500$ and $p=1000$.}
\label{tab:2}
\setlength{\tabcolsep}{1.5pt}
\begin{tabular}{llllll}
\hline\noalign{\smallskip}
Methods & ReErr & NNZ & Res & Time(s) & Iter\\
\noalign{\smallskip}\hline\noalign{\smallskip}
SSNAL$_a$ & 7.03e-4 (1.10e-4)& 20 (0)& 2.60e-7 (2.24e-7) & 0.06 (0.04) & 1.7 (0.67)  \\
\specialrule{0em}{3pt}{3pt}
SSNAL$_l$  & 2.90e-3 (6.15e-4)& 19.9 (0.31)& 1.34e-7 (2.93e-7) & 0.53 (0.20) & 4.9 (0.31)\\
\specialrule{0em}{3pt}{3pt}
ADMM$_a$ & 7.03e-4 (1.10e-4) & 20 (0) & 9.97e-7 (2.49e-9) & 52.37 (3.03)& 955 (39.79)\\
\specialrule{0em}{3pt}{3pt}
ADMM$_l$  & 2.90e-3 (6.00e-4) & 20 (0) & 9.94e-7 (3.56e-9) & 51.43 (3.26) & 944.7 (31.51) \\
\noalign{\smallskip}\hline
\end{tabular}
\end{table}

\subsection{Real Data Study}

In this section, we further evaluate the effectiveness of PLS and the numerical advantages of SSNAL with the workers' wage data which is available at \url{https://rdrr.io/cran/ISLR/man/Wage.html}.
This data set contains the wage information of $3000$ male workers in the Mid-Atlantic region, covering factors such as year, age, marriage status, race, education level, region, type of job, health level, health insurance information.
Specifically, in this test, we don't consider the data of year in which wage information was recorded and the logarithm of workers' wage for the sake of simplicity.
It should be noted that since the data only includes the Mid-Atlantic region, the use of regional indicators is redundant.
We note that there should be a non-linear relationship between the education level and the wage, so the education level is treated as the variable $T$ in the non-parameter part $g(T)$ in (\ref{sprm}).
In this test, we consider $6$ covariates to relate the wage for each worker, i.e., age, martial status, race, type of job, health level, and health insurance, which are denoted respectively as
$X_{ij}$ from $j=1$ to $6$ for each worker $i$.
More descriptions for each sample $X_i$ for index $j$ can be found in the second column of Table \ref{tab:3}.

For numerical convenience, we normalize all predictors with mean $0$ and variance $1$.
Moreover, for the adaptive lasso penalized models, the method for generating the weight vector $\omega$ is the same as the one in the low-dimensional case tested previously.
As before, we also use SSNAL and ADMM for problem (\ref{pls}) to estimate the coefficient $\beta$ in model (\ref{sprm}) with lasso and
adaptive lasso penalties, respectively.
The $\bar{\beta}$ estimated by SSNAL and ADMM with adaptive lasso penalty (named SSNALa and ADMMa) and lasso penalty (named SSNALl and ADMMl) are reported in the third-to-last column of Table \ref{tab:3}.
The numerical results of Res, NNZ, Iter, Time(s) are reported in bottom part of Table \ref{tab:3}.
These results indicate that adaptive lasso penalized model can select $5$ covariates, whereas the lasso penalized model cannot, and SSNAL requires fewer iterations and is significantly faster than ADMM.

\begin{table}[h]
\centering
\caption{Numerical results of SSNAL and ADMM  on the $3000$ male workers' wage.}
\label{tab:3}
\setlength{\tabcolsep}{1.5pt}
\begin{tabular}{clcccc}
\noalign{\smallskip}
\hline\noalign{\smallskip}
Variable & Description & $\bar{\beta}$(SSNALa) & $\bar{\beta}$(SSNALl) & $\bar{\beta}$(ADMMa) & $\bar{\beta}$(ADMMl)\\
\noalign{\smallskip}\hline\noalign{\smallskip}
$X_{i1}$ & Age of worker & 5.9035 & 5.6215 & 5.9035 & 5.6215 \\
\specialrule{0em}{3pt}{3pt}
$X_{i2}$  &Marital status:& 0 & 0.9409 & 0 & 0.9409\\
 & (1=Never Married, & & & & \\
 & 2=Married, & & & & \\
 & 3=Widowed, & & & & \\
  & 4=Divorced, & & & & \\
 & 5=Separated) & &  & & \\
\specialrule{0em}{3pt}{3pt}
$X_{i3}$ &Race: & 2.0303 & 2.1937 & 2.0303 & 2.1937\\
 & (1=Other, 2=Black, & & & & \\
 & 3=Asian, 4=White) & &    & & \\
\specialrule{0em}{3pt}{3pt}
$X_{i4}$ &Type of job: & 1.3296 & 1.6557  & 1.3296 & 1.6557 \\
 & (1=Industrial, & & & & \\
  & 2=Information) & & & & \\
\specialrule{0em}{3pt}{3pt}
$X_{i5}$ & Health level: & 3.4366 & 3.4914  & 3.4366 &  3.4914\\
 & (1=Good, 2=Very Good)   & &    & & \\
\specialrule{0em}{3pt}{3pt}
$X_{i6}$  & Health insurance: & 8.1894 & 8.1401 & 8.1894 & 8.1401\\
& (1=Yes, 0=No)  & &    & & \\
\noalign{\smallskip}\hline
Res & & 4.86e-7 & 1.71e-7 & 9.75e-7 &  6.29e-7\\
NNZ & & 5 & 6 & 5 &  6\\
Iter & & 3 & 2 & 277 & 16 \\
Time(s) & & 2.01 & 0.26 & 276.76 & 15.37 \\
\noalign{\smallskip}\hline
\noalign{\smallskip}
\end{tabular}
\end{table}
\section{Conclusions}\label{consec}

This paper focuses on a partially linear semiparametric regression model with an unknown regression coefficient and an unknown nonparametric function. Specifically, we proposed a PLS method to estimate and select the regression coefficient. We showed that the oracle property of the proposed PLS can be easily followed from some existing works. For practical implementation, this paper technically employed an efficient SSNAL method, which differs from almost all existing approaches by targeting the corresponding dual problem. Additionally, a semismooth Newton algorithm was used to solve the strongly semismooth nonlinear system involved in each iteration by making full use of the structure of lasso. Finally, we tested the algorithm and performed a performance comparison with ADMM with some random synthetic data and real data. The comparison results showed that PLS is very effective and the performance of SSNAL is highly efficient. This paper currently considers only the adaptive lasso penalty. The extensions to other convex or non-convex penalties are highly worthwhile for further research.
\section*{Acknowledgements}

The work of Peili Li is supported by the National Natural Science Foundation of China (Grant No. 12301420).
The work of Yunhai Xiao is supported by the National Natural Science Foundation of China (Grant No. 12471307 and 12271217), the National Natural Science Foundation of Henan Province (Grant No. 232300421018).
The work of Hanbing Zhu is supported by the National Natural Science Foundation of China (Grant No. 12201218).

\section*{Disclosure statement}

The authors report there are no competing interests to declare.


\begin{thebibliography}{9}

\bibitem{BYRD} R. H. Byrd, G. M. Chin, J. Nocedal, et al, A family of second-order methods for convex $\ell_1$-regularized optimization. \textit{Mathematical Programming}. \textbf{159} (2016), 435--467.

\bibitem{FH} F. H. Clarke, Optimization and nonsmooth analysis. \textit{John Wiley and Sons, New York.} (1983).

\bibitem{Engle} R. F. Engle, C. W. J. Granger, J. Rice, et al, Semiparametric estimates of the relation between weather and electricity sales. \textit{Journal of the American Statistical Association}. \textbf{81} (1986), 310--320.

\bibitem{FG1996} J. Q. Fan, Local polynomial modelling and its applications. \textit{Routledge.} \url{https://doi.org/10.1201/9780203748725}. (1996).

\bibitem{FanLi} J. Q. Fan and R. Z. Li, Variable selection via nonconcave penalized likelihood and its oracle properties. \textit{Journal of the American Statistical Association}. \textbf{96} (2001), 1348--1360.

\bibitem{long-data} J. Q. Fan and R. Z. Li, New estimation and model selection procedures for semiparametric modeling in longitudinal data analysis. \textit{Journal of the American Statistical Association}. \textbf{99} (2004), 710--723.

\bibitem{SEMP13} M. Fazel, T. K. Pong, D. F. Sun, et al, Hankel matrix rank minimization with applications to system identification and realization. \textit{SIAM Journal on Matrix Analysis and Applications}. \textbf{34} (2013), 946--977.

\bibitem{GM1976} D. Gabay and B. Mercier, A dual algorithm for the solution of nonlinear variational problems via finite element approximation. \textit{Computers $\&$ mathematics with applications}. \textbf{2} (1976), 17--40.

\bibitem{GR} R. Goebel and R. T. Rockafellar, Local strong convexity and local Lipschitz continuity of the gradient of convex functions. \textit{Journal of Convex Analysis}. \textbf{15} (2008), 263--270.

\bibitem{SMW} G. H. Golub and C. F. Van Loan, Matrix computations. \textit{Johns Hopkins University Press}. (1996).

\bibitem{BOOK1} W. H$\ddot{a}$rdle, H. Liang and J. Gao, Partially linear models. \textit{Springer Science $\&$ Business Media}. (2000).

\bibitem{JJL2015} Y. L. Jiao, B. T. Jin and X. L. Lu, A primal dual active set with continuation algorithm for the $\ell^0$-regularized optimization problem. \textit{Applied and Computational Harmonic Analysis}. \textbf{39} (2015), 400--426.

\bibitem{KK2007} S. Konishi and G. Kitagawa, Information criteria and statistical modeling. \textit{Springer Science $\&$ Business Media}. (2008).

\bibitem{K2021} E. Kwessi, Double penalized semi-parametric signed-rank regression with adaptive LASSO. \textit{Journal of Systems Science and Complexity}. \textbf{34} (2021), 381--401.

\bibitem{Pls} F. Li, Y. Q. Lu and G. R. Li, Variable selection for partially linear models via adaptive LASSO. \textit{Chinese Journal of Applied Probability and Statistics}. \textbf{28} (2012), 614--624.

\bibitem{LTX2008} G. R. Li, P. Tian and L. G. Xue, Generalized empirical likelihood inference in semiparametric regression model for longitudinal data. \textit{Acta Mathematica Sinica, English Series}. \textbf{24} (2008), 2029--2040.

\bibitem{LST} X. D. Li, D. F. Sun and K. C. Toh, A highly efficient semismooth Newton augmented Lagrangian method for solving Lasso problems. \textit{SIAM Journal on Optimization}. \textbf{28} (2018), 433--458.

\bibitem{error} H. Liang and R. Z. Li, Variable selection for partially linear models with measurement errors. \textit{Journal of the American Statistical Association}. \textbf{104} (2009), 234--248.

\bibitem{double} X. Ni, H. H. Zhang and D. W. Zhang, Automatic model selection for partially linear models. \textit{Journal of Multivariate Analysis}. \textbf{100} (2009), 2100--2111.

\bibitem{RR} R. T. Rockafellar, Convex analysis. \textit{Princeton University Press}. (1970).

\bibitem{Rockafellar1976a} R. T. Rockafellar, Augmented Lagrangians and applications of the proximal point algorithm in convex programming. \textit{Mathematics of Operations Research}. \textbf{1} (1976), 97--116.

\bibitem{Rockafellar1976b} R. T. Rockafellar, Monotone operators and the proximal point algorithm. \textit{SIAM Journal on Control and Optimization}. \textbf{14} (1976), 877--898.

\bibitem{RW} R. T. Rockafellar and R. J. B. Wets, Variational analysis. \textit{Grundlehren der mathematischen Wissenschaften}. (1998).

\bibitem{SPECKMAN} P. Speckman, Kernel smoothing in partial linear models. \textit{Journal
of the Royal Statistical Society. Series B (Methodological)}. \textbf{50} (1988), 413--436.

\bibitem{Stigler} S. M. Stigler, Gauss and the invention of least squares. \textit{The Annals of Statistics}. \textbf{9} (1981), 465--474.

\bibitem{LASSO} R. Tibshirani, Regression shrinkage and selection via the lasso. \textit{Journal of the Royal Statistical Society. Series B (Methodological)}. \textbf{58} (1996), 267--288.

\bibitem{FLASSO} R. Tibshirani, M. Saunders, S. Rosset, et al, Sparsity and smoothness via the fused lasso. \textit{Journal of the Royal Statistical Society. Series B (Methodological)}. \textbf{67} (2005), 91--108.

\bibitem{WLT2007} H. S. Wang, R. Z. Li and C. L. Tsai, Tuning parameter selectors for the smoothly clipped absolute deviation method. \textit{Biometrika}. \textbf{94} (2007), 553--568.

\bibitem{WKL2013} L. Wang, Y. D. Kim and R. Z. Li, Calibrating non-convex penalized regression in ultra-high dimension. \textit{The Annals of Statistics}. \textbf{41} (2013), 2505--2536.

\bibitem{WJ2003} Q. H. Wang and B. Y. Jing, Empirical likelihood for partial linear models. \textit{Annals of the Institute of Statistical Mathematics}. \textbf{55} (2003), 585--595.

\bibitem{spline} H. L. Xie and J. Huang, SCAD-penalized regression in high-dimensional partially linear models. \textit{The Annals of Statistics}. \textbf{37} (2009), 673--696.

\bibitem{Zou} H. Zou, The adaptive Lasso and its oracle properties. \textit{Journal of the American Statistical Association}. \textbf{101} (2006), 1418--1429.

\end{thebibliography}
\end{document}